\newtheorem{theorem}{Theorem}
\newtheorem{definition}[theorem]{Definition}
\newtheorem{proposition}[theorem]{Proposition}
\newtheorem{corollary}[theorem]{Corollary}
\begin{document}

\title{\vspace{-2cm}A Computable Measure of Algorithmic Probability by Finite Approximations with an Application to Integer Sequences}

\author[1,3]{Fernando Soler-Toscano}
\author[2,3,4]{Hector Zenil~\thanks{The author(s) declare(s) no conflict of interest regarding the publication of this paper. Corresponding author: hector.zenil [at] algorithmicnaturelab [dot] org}}
\affil[1]{Grupo de L\'ogica, Lenguaje e Informaci\'on, Universidad de Sevilla, Sevilla,
Spain.}   
\affil[2]{Information Dynamics Lab, Center for Molecular Medicine, Science For Life Laboratory (SciLifeLab), Department of Medicine Solna, Karolinska Institute, Stockholm, Sweden.}
\affil[3]{Group of Structural Biology, Department of Computer Science, University of Oxford, Oxford, U.K.}
\affil[4]{Algorithmic Nature Group, LABORES, Paris, France.}

\date{}

\renewcommand\Authands{ and }

\maketitle{}

\begin{abstract}
Given the widespread use of lossless compression algorithms to approximate algorithmic (Kolmogorov-Chaitin) complexity, and that, usually, generic lossless compression algorithms fall short at characterizing features other than statistical ones not different to entropy evaluations, here we explore an alternative and complementary approach. We study formal properties of a Levin-inspired measure $m$ calculated from the output distribution of small Turing machines. We introduce and justify finite approximations $m_k$ that have been used in some applications as an alternative to lossless compression algorithms for approximating algorithmic (Kolmogorov-Chaitin) complexity. We provide proofs of the relevant properties of both $m$ and $m_k$ and compare them to Levin's Universal Distribution. We provide error estimations of $m_k$ with respect to $m$. Finally, we present an application to integer sequences from the Online Encyclopedia of Integer Sequences which suggests that our AP-based measures may characterize non-statistical patterns, and we report interesting correlations with textual, function and program description lengths of the said sequences.\\

\noindent{}\textbf{Keywords:} Kolmogorov-Chaitin complexity, algorithmic probability, algorithmic Coding Theorem, Turing machines, integer sequences
\end{abstract}

\section{Algorithmic information measures}
\label{sec:introduction}

Central to Algorithmic Information Theory is the definition of algorithmic (Kol\-mo\-go\-rov-Chaitin or program-size)
complexity~\cite{kolmo,chaitin}: 
\begin{equation}
\label{kolmoeq}
K_T(s) = \min \{|p|, T(p)=s\}
\end{equation}
where $p$ is a program that outputs $s$ running on a universal Turing
machine $T$ and $|p|$ is the length in bits of $p$. The measure
was first conceived to define randomness and is today the accepted
objective mathematical measure of randomness, among other reasons 
because it has been proven to be mathematically
robust~\cite{levin}. In the following, 
we use $K(s)$ instead of $K_T(s)$ because the choice of $T$ is only
relevant up to an additive constant (Invariance Theorem). A technical 
inconvenience of $K$ as a function taking $s$  
to be the length of the shortest program that produces $s$ is its 
uncomputability. In other words, there is no program which takes a
string $s$ as input and produces the integer $K(s)$ as output. This is
 usually considered a major problem, but one ought to expect a
universal measure of randomness to have such a property. 

In previous papers~\cite{delahayezenil,d5} we have introduced a novel 
method to approximate $K$ based on the seminal concept of 
algorithmic probability (or AP), introduced by Solomonoff~\cite{solomonoff} and 
further formalized by Levin~\cite{levin}, who proposed the concept of uncomputable semi-measures and the so-called Universal Distribution. 

Levin's semi-measure\footnote{It is called a \emph{semi} measure because, unlike probability measures, the sum is never 1. This is due to the
 Turing machines that never halt.} $\mathfrak{m}_T$ defines the 
so-called Universal Distribution~\cite{kircher}, the value
$\mathfrak{m}_T(s)$ being the probability that a random program halts and produces $s$ running on a universal Turing machine $T$. The choice of $T$ is only relevant up to a multiplicative constant, so we will simply write 
$\mathfrak{m}$ instead of $\mathfrak{m}_T$. 

It is possible to use $\mathfrak{m}(s)$ to approximate $K(s)$ by 
means of the following theorem:

\begin{theorem}[Algorithmic Coding Theorem~\cite{levin}]\label{theo:codTh}
  There is a constant $c$ such that
  \begin{equation}\label{eq:CT}
|-\log_2 \mathfrak{m}(s) - K(s)| < c
  \end{equation}
\end{theorem}

This implies that if a string $s$ has many descriptions (high value of $\mathfrak{m}(s)$, as the string is produced many times, which implies a low value of $-\log_2 \mathfrak{m}(s)$, given that $\mathfrak{m}(s)<1$), it also has a short description (low value of $K(s)$). This is because the most frequent strings produced by programs of length $n$ are those which were already produced by programs of length $n-1$, as extra bits can produce redundancy in an exponential number of ways. On the other hand, strings produced by programs of length $n$ that could not be produced by programs of length $n-1$ are less frequently produced by programs of length $n$, as only very specific programs can generate them (see Section 14.6 in~\cite{cover}).  
This theorem elegantly connects probability to complexity---the frequency (or probability) of occurrence of a string with its algorithmic (Kolmogorov-Chaitin) complexity. It implies that~\cite{delahayezenil} one can calculate the Kolmogorov
complexity of a string from its
frequency~\cite{delahayezenil}, simply 
rewriting the formula as: 
\begin{equation}
K(s)=-\log_2 \mathfrak{m}(s) + O(1)
\end{equation}Thanks to this elegant connection established by ~\eqref{eq:CT} between algorithmic complexity and probability, our method can attempt to approximate an algorithmic probability measure by means of finite approximations using a fixed model of computation. The method is called the Coding Theorem Method (CTM)~\cite{d5}. 

In this paper we introduce $m$, a computable approximation to
$\mathfrak{m}$ that can be used to approximate $K$ by means of the
algorithmic Coding theorem. Computing $m(s)$ requires the output of a
numerable infinite number of Turing machines, so we first undertake
the investigation of finite approximations $m_k(s)$ that require only
the output of machines up to $k$ states. A key property of
$\mathfrak{m}$ and $K$ is their universality: the choice of the Turing machine used to compute the distribution is only relevant up to an (additive) constant, independent of the objects. The computability of this measure implies its lack of universality. The same is true when using common lossless compression algorithms to approximate $K$, but on top of their non-universality in the algorithmic sense, they are block entropy estimators as they traverse files in search of repeated patterns in a fixed-length window to build a replacement dictionary. Nevertheless, this does not prevent lossless compression algorithms to find useful applications in the same way as more algorithmic-based motivated measures can contribute even if also limited. Indeed,  $\mathfrak{m}$ has found successful 
applications in cognitive sciences~\cite{ploscompbio,nicoetal,gauvrit,visual,cognition}, in financial time series research~\cite{zenileco,linma}, graph theory and networks~\cite{zenilgraph,methodszenil,zenilphysres}. However, a thorough investigation to explore the properties of these measures, and to provide theoretical error estimations was missing. 

We start by presenting our Turing machine formalism (Section~\ref{sec:turing-machine-form}) and then show that it can be used to encode a prefix-free set of programs (Section~\ref{sec:turing-machines-as}). Then, in 
Section~\ref{sec:appr-levins-distr} we define a computable algorithmic probability measure $m$ based on our Turing machine formalism and prove its main properties, both for $m$ and for finite approximations $m_k$. In Section~\ref{sec:experiment} we compute $m_5$, compare it
with our previous distribution $D(5)$~\cite{d5} and estimate the error in $m_5$ as an approximation to $m$. We finish with some comments in
Section~\ref{sec:comm}.

\section{The Turing machine formalism}
\label{sec:turing-machine-form}

We denote by $(n,2)$ the class (or space) of all $n$-state 2-symbol Turing machines (with the halting state not included among the $n$ states) following the Busy Beaver Turing machine formalism as defined by Rado~\cite{rado}. Busy Beaver Turing machines are deterministic  machines with a single head and a single tape unbounded in both directions. When the machine enters the halting state the head no longer moves and the output is considered to comprise only the cells visited by the head prior to halting. Formally, 

\begin{definition}[Turing machine formalism] \label{def:turing-machine-form}
We designate as $(n,2)$ the \emph{set of Turing machines} with two symbols $\{0,1\}$ and $n$ states $\{1,\cdots,n\}$ plus a halting state $0$. These machines have $2n$ entries $(s_1,k_1)$ (for $s\in
\{1,\cdots,n\}$ and $k\in\{0,1\}$) in the \emph{transition table}, each with one \emph{instruction}
that determines their behavior. Such entries are
 represented by
\begin{equation}
\label{eq:instruction}
(s_1,k_1) \rightarrow (s_2, k_2,d)
\end{equation}
where $s_1$ and $k_1$ are 
respectively the current state and the symbol being read and 
$(s_2,k_2,d)$ represents the instruction to be executed: $s_2$ is the new state, $k_2$ the symbol to write and $d$ the direction. If $s_2$ is the halting state $0$, then $d=0$, otherwise $d$ is $1$ (right) or $-1$ (left).
\end{definition}

\begin{proposition} \label{prop:n2enumer}
  Machines in $(n,2)$ can be enumerated from $0$ to $(4n+2)^{2n}-1$  
\end{proposition}

\begin{proof}
Given the constraints in Definition~\ref{def:turing-machine-form}, for each transition of a 
Turing machine in $(n,2)$ there are $4n+2$ different instructions $(s_2,k_2,d)$. These are $2$ instructions when $s_2 = 0$ (given that $d=0$ is fixed and $k_2$ can be one of the two possible symbols) and $4n$ instructions if $s_2 \neq 0$ ($2$ possible moves, $n$ states and $2$ symbols). Then, considering the $2n$ entries in the transition table,
\begin{equation}
  \label{eq:1e}
  |(n,2)| = \left(4n+2\right)^{2n}
\end{equation}
These machines can be enumerated from $0$ to $|(n,2)|-1$. Several 
enumerations are possible. We can, for example, use a lexicographic ordering on the transitions~\eqref{eq:instruction}. 
\end{proof}

For the current paper, consider that some enumeration has been 
chosen. Thus we use $\tau^n_t$ to denote the machine number $t$ in $(n,2)$ following that enumeration.

\section{Turing machines as a prefix-free set of programs}
\label{sec:turing-machines-as}

We show in this section that the set of Turing machines following the 
Busy Beaver formalism can be encoded as a prefix-free set of programs 
capable of generating any finite non-empty binary string.

\begin{definition}[Execution of a Turing machine]
  \label{def:TMexec}
  Let $\tau\in(n,2)$ be a Turing machine. We denote by $\tau(i)$
 the execution of $\tau$ over an infinite tape filled with $i$ (a blank
  symbol), where $i\in\{0,1\}$. We write $\tau(i)\downarrow$ if
  $\tau(i)$ halts, and $\tau(i)\uparrow$ otherwise. We write
  $\tau(i)=s$ iff
  \begin{itemize}
  \item $\tau(i)\downarrow$, and
  \item $s$ is the output string of $\tau(i)$, defined as the
 concatenation of the symbols in the tape of $\tau$ that were visited at some instant of the execution $\tau(i)$.
  \end{itemize}
\end{definition}

As Definition~\ref{def:TMexec} establishes, we are only considering
machines running over a blank tape with no input. Observe that
 the output of $\tau(i)$ considers the symbols in all cells of the tape written on by $\tau$ during the computation, so the
  output contains the entire fragment of the tape that was used. To produce a 
symmetrical set of strings, we consider both symbols $0$ and $1$ as
possible blank symbols. 

\begin{definition}[Program]\label{deF:program}
  A \emph{program} $p$ is a triplet $\langle n, i, t\rangle$, where
  \begin{itemize}
  \item $n\geq 1$ is a natural number
  \item $i\in\{0,1\}$
  \item $0\leq t <(4n+2)^{2n}$
  \end{itemize}
  We say that the \emph{output} of $p$ is $s$ if, and only if, $\tau^n_t(i)=s$.
\end{definition}

Programs can be executed by a universal Turing machine that reads a 
binary encoding of $\langle n, i, t\rangle$ (Definition~\ref{deF:progrEncod})
and simulates $\tau^n_t(i)$. Trivially, for each finite binary 
string $s$ with length $|s|>0$, there is a program $p$ which outputs 
$s$. 

Now that we have a formal definition of programs, we show 
that the set of valid programs can be 
represented as a prefix-free set of binary strings.

\begin{definition}[Binary encoding of a program]\label{deF:progrEncod}
Let $p=\langle n, i, t\rangle$ be a program
  (Definition~\ref{deF:program}). The \emph{binary encoding} of $p$ is
 a binary string with the following sequence of bits:
  \begin{itemize}
  \item First, $1^{n-1}0$, that is, $n-1$ repetitions of $1$ followed
 by $0$. This way we encode $n$.
  \item Second, a bit with value $i$ encodes the blank symbol.
  \item Finally, $t$ is encoded using $\lceil \log_2
\left((4n+2)^{2n}\right) \rceil$ bits.
  \end{itemize}
\end{definition}

The use of $\lceil \log_2 \left((4n+2)^{2n}\right) \rceil$ bits to 
represent $t$ ensures that all programs with the same $n$ are 
represented by strings of equal size. As there are $(4n+2)^{2n}$
machines in $(n,2)$, with these bits we can represent any value of
$t$. The process of reading the binary encoding of a 
program $p=\langle n, i, t\rangle$ and simulating $\tau^n_t(i)$ is 
computable, given the enumeration of Turing machines. 

As an example, this is the binary representation of the program
$\langle 2, 0, 185\rangle$:
\begin{center}
  \begin{tabular}{|cc|c|cccccccccccccc|}
\hline 
1 & 0 & 0 & 0 & 0 & 0 & 0 & 0 & 0 & 1 & 0 & 1 & 1 & 1 & 0 & 0 & 1 \\\hline
  \end{tabular}
\end{center}

The proposed encoding is prefix-free, that is, there is no pair 
of programs $p$, $p'$ such that the binary 
encoding of $p$ is a prefix of the binary encoding of $p'$. This is because the $n$ initial bits 
of the binary encoding of $p=\langle n, i, t\rangle$ determine the 
length of the encoding. So $p'$ cannot be encoded by a 
binary string having a different length but the same $n$ initial 
bits.

\begin{proposition}[Programming by coin flips]\label{prop:coinFlips}
  Every source producing an arbitrary number of random bits generates
 a unique program (provided it generates at least one $0$).
\end{proposition}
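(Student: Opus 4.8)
The plan is to exhibit the decoding of Definition~\ref{deF:progrEncod} as a deterministic, self-delimiting reading procedure, and to read off the three fields of the program in order, arguing at each stage that the number of bits to be consumed is fixed by the bits already read. Since the prefix-free property has already been established (the $n$ leading bits fix the total length of the encoding), it suffices to show that the stream emitted by the source is parsed into a single triple $\langle n, i, t\rangle$.

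First I would treat the random source as emitting bits $b_1 b_2 b_3 \ldots$ and feed them to a decoder. Step one: read bits until the first $0$ is encountered; if this $0$ occurs at position $n$, then the consumed prefix is $1^{n-1}0$ and the first field is decoded as the state count $n \geq 1$. This is precisely the step that requires the proviso: if the source never emits a $0$, the unary length field never closes, $n$ is left undefined, and no program is produced; conversely, as soon as a $0$ appears, $n$ is determined and finite. Step two: read the single next bit $b_{n+1}$, which is the blank symbol $i \in \{0,1\}$. Step three: since $n$ is now known, the length $\ell(n)=\lceil \log_2((4n+2)^{2n})\rceil$ of the final field is known as well, so read exactly the next $\ell(n)$ bits and interpret them as the binary integer $t$. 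The decoder thus consumes a determinate, finite initial segment of the stream and outputs the triple $\langle n, i, t\rangle$.

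Uniqueness follows because at every stage the count of bits to be consumed is dictated only by bits already read, so the parse is forced: the same stream cannot yield two different triples, and by the prefix-free property no stream is split ambiguously at a field boundary. The one point that needs care---and the step I expect to be the main obstacle---is that the final field can encode integers up to $2^{\ell(n)}-1$, whereas a valid program requires $0 \leq t < (4n+2)^{2n}$. These two bounds do not coincide: writing $4n+2 = 2(2n+1)$ gives $(4n+2)^{2n} = 2^{2n}(2n+1)^{2n}$, whose odd factor $(2n+1)^{2n} > 1$ shows that $(4n+2)^{2n}$ is never a power of two, so $2^{\ell(n)} > (4n+2)^{2n}$ strictly and some blocks decode to an out-of-range index. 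I would resolve this by fixing a convention for out-of-range blocks (for instance, declaring that such a stream produces no program, or reducing the decoded value modulo $(4n+2)^{2n}$); under either convention the program generated by a given stream, when it exists, is unique, which is what the statement asserts. It is worth noting that under the first convention the out-of-range strings simply contribute to the undefined mass, which is consistent with $\mathfrak{m}$ being only a semi-measure.
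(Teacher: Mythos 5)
Your proposal is correct and follows essentially the same route as the paper's proof: the same unary-then-fixed-length parsing of the stream into $\langle n,i,t\rangle$, the same role for the proviso (the first $0$ closes the unary field), and you even isolate the one genuinely delicate point, namely that $\lceil \log_2 ((4n+2)^{2n})\rceil$ bits can encode values $t \geq (4n+2)^{2n}$. The only divergence is the convention for such out-of-range blocks: the paper associates them with a trivial non-halting machine (e.g.\ one whose initial transition stays in the initial state), which keeps the statement literally true---a program is \emph{always} produced---and preserves the interpretation that each program is generated with probability exactly $2^{-|p|}$; your first convention (``no program'') technically contradicts the claim that every such source generates a program, and your modulo convention, while giving uniqueness, would let several distinct bit-blocks decode to the same machine and thus inflate its coin-flip probability above $2^{-|p|}$, distorting the probabilistic reading that this proposition is meant to support.
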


\begin{proof}
  The bits in the sequence are used to produce a unique program
 following Definition~\ref{deF:progrEncod}. We start by producing the
 first $n$ part, by selecting all bits until the first $0$
 appears. Then the next bit gives $i$. Finally, as we know the value
 of $n$, we take the following $\lceil \log_2
  \left((4n+2)^{2n}\right) \rceil$ bits to set the value of $t$. It is
 possible that constructing the program in this way, the value of $t$
 is greater than the maximum $(4n+2)^{2n}-1$ in the enumeration. In
 which case we associate the program with some trivial non-halting Turing
 machine. For example a machine with the initial transition 
staying at the initial state. 
\end{proof}

The idea of programming by coin flips is very common in
 Algorithmic Information Theory. It produces a prefix-free coding
  system, that is, there is no string $w$ encoding a program $p$ which
 is a prefix of a string $wz$ encoding a program $p' \neq p$. These
 coding systems make longer programs (for us, Turing machines with
 more states) exponentially less probable than short programs. In our
 case, this is because of the initial sequence of $n-1$ repetitions
 of $1$, which are produced with probability $1/2^{n-1}$.  This
 observation is important because when we later use machines in
  $\bigcup_{n=1}^k (n,2)$ to reach a finite approximation of our
  measure, the greater $k$ is, the exponentially smaller the error we
  will be allowing: the probability of producing by coin flips a
  random Turing machine with more than $k$ states decreases
  exponentially with $k$~\cite{cover}.

\section{A Levin-style algorithmic measure}
\label{sec:appr-levins-distr}

\begin{definition}
  Given a Turing machine $\mathcal{A}$ accepting a prefix-free set of
 programs, the \emph{probability distribution} of $\mathcal{A}$ is
 defined as
  \begin{equation}
\label{eq:12}
P_\mathcal{A}(s) = \sum_{p: \mathcal{A}(p) = s} \frac{1}{2^{|p|}}
  \end{equation}
  where $\mathcal{A}(p)$ is equal to $s$ if and only if $\mathcal{A}$
  halts with input $p$ and produces $s$. The length in bits of program
  $p$ is represented by $|p|$. 
\end{definition}

If $\mathcal{A}$ is a universal Turing machine,
$P_{\mathcal{A}}(s)$ measures how frequently the output $s$
 is generated when running random programs at $\mathcal{A}$. Given that the sum of $P_{\mathcal{A}}(s)$ for all strings is not $1$ (non-halting programs
 not producing any strings are counted in $2^{|p|}$) it is said to be a semi-measure, also known as Levin's distribution~\cite{levin}. The distribution is universal in the sense that the choice of $\mathcal{A}$ (among all the infinite possible universal reference Turing machines) is only relevant up to a multiplicative  constant and that the distribution is based on the universal model of Turing computability.

\begin{definition}[Distribution $m(s)$]
  Let $\mathcal{M}$ be a Turing machine executing the programs introduced in
 Definition~\ref{deF:program}. Then, $m(s)$ is defined by
  \[m(s) = P_{\mathcal{M}}(s).\]
\end{definition}

\begin{theorem}For any binary string $s$, 
  \begin{equation}
\label{eq:2}
m(s) = \sum_{n=1}^\infty \frac{|\{ \tau\in(n,2) \mid \tau(0)=s
  \}|+|\{  \tau\in(n,2) \mid \tau(1)=s  \}|}{2^{n+1+\lceil \log_2
\left((4n+2)^{2n}\right) \rceil}} 
  \end{equation}
\end{theorem}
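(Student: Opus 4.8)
The plan is to unfold the definition of $m(s) = P_{\mathcal{M}}(s)$ and reorganize the sum over programs into a sum over the number of states $n$.

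By definition, $P_{\mathcal{M}}(s) = \sum_{p:\,\mathcal{M}(p)=s} 2^{-|p|}$, where the sum ranges over binary encodings of programs $p = \langle n, i, t\rangle$ in the prefix-free set established in Section~\ref{sec:turing-machines-as}. First I would compute $|p|$ for an arbitrary program. From Definition~\ref{deF:progrEncod}, the encoding consists of $n$ bits to specify the number of states (the block $1^{n-1}0$), one bit for the blank symbol $i$, and $\lceil \log_2\left((4n+2)^{2n}\right)\rceil$ bits for the machine index $t$. Hence every program with parameter $n$ has the same length
\begin{equation*}
|p| = n + 1 + \lceil \log_2\left((4n+2)^{2n}\right)\rceil,
\end{equation*}
which is exactly the exponent appearing in the denominator of~\eqref{eq:2}. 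This is the crux: the length depends only on $n$, not on $i$ or $t$, so all programs sharing a given $n$ contribute the identical weight $2^{-|p|}$.

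Next I would partition the programs producing $s$ according to $n$. For fixed $n$, a program $\langle n, i, t\rangle$ outputs $s$ precisely when $\tau^n_t(i) = s$ (Definition~\ref{deF:program}), so the programs with this $n$ that output $s$ are in bijection with pairs $(i,t)$ such that $\tau^n_t(i)=s$, where $i\in\{0,1\}$ and $0\le t < (4n+2)^{2n}$. The number of such pairs is exactly
\begin{equation*}
|\{\tau\in(n,2)\mid \tau(0)=s\}| + |\{\tau\in(n,2)\mid \tau(1)=s\}|,
\end{equation*}
splitting the count according to the two choices of blank symbol $i=0$ and $i=1$. Since each of these programs contributes the common weight $2^{-(n+1+\lceil \log_2((4n+2)^{2n})\rceil)}$, grouping them gives the desired numerator over that denominator for each $n$, and summing over all $n\ge 1$ yields~\eqref{eq:2}.

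I do not anticipate a serious obstacle here, as the argument is essentially a rearrangement of a sum justified by the prefix-free encoding. The one point requiring care is making the bijection between programs and machine-blank-symbol pairs fully precise: I must confirm that distinct valid programs yield distinct encodings (so no program is double-counted) and that every machine-index $t$ in range corresponds to a genuine program, both of which follow from the construction in Definition~\ref{deF:progrEncod}. A subtlety worth noting is that the enumeration covers all of $(n,2)$, including non-halting machines, but these simply do not satisfy $\tau^n_t(i)=s$ and therefore drop out of the sum automatically, so no separate treatment is needed. With the length formula and the bijection in hand, the identity follows by collecting terms.
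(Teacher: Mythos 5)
Your proposal is correct and follows essentially the same route as the paper's own proof: the length formula $|p| = n+1+\lceil \log_2((4n+2)^{2n})\rceil$ justifies the denominator, and grouping the programs with fixed $n$ that output $s$ into the machine/blank-symbol pairs $(i,t)$ with $\tau^n_t(i)=s$ justifies the numerator, with machines producing $s$ on both blank symbols counted twice as two distinct programs. Your added remarks on injectivity of the encoding and on non-halting machines dropping out are implicit in the paper's argument but harmless to make explicit.
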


\begin{proof}
  By Definition~\ref{deF:progrEncod}, the length of the encoding of
 program $p=\langle n, i, t\rangle$ is $n+1+\lceil \log_2
  \left((4n+2)^{2n}\right) \rceil$. It justifies the denominator
  of~\eqref{eq:2}, as~\eqref{eq:12} requires it to be $2^{|p|}$. For
  the numerator, observe that the set of programs producing $s$ with
 the same $n$ value corresponds to all machines in $(n,2)$ producing
  $s$ with either $0$ or $1$ as blank symbol. Note that if a machine
  produces $s$ both with $0$ and $1$, it is counted twice, as each execution is represented by a different program (that differ only
 as to the $i$ digit).
\end{proof}

\subsection{Finite approximations to $m$}
\label{sec:finite-appr-ms}

The value of $m(s)$ for any string $s$ depends on the output of an
 infinite set of Turing machines, so we have to manage ways to 
approximate it. The method proposed in Definition~\ref{def:mk}
approximates $m(s)$ by considering only a finite number of Turing 
machines up to a certain number of states. 

\begin{definition}[Finite approximation $m_k(s)$] \label{def:mk}
  The \emph{finite approximation to} $m(s)$ \emph{bound to} $k$
  \emph{states}, $m_k(s)$, is defined as 
  \begin{equation}
\label{eq:8}
m_k(s) = \sum_{n=1}^k \frac{|\{ \tau\in(n,2) \mid \tau(0)=s
  \}|+|\{  \tau\in(n,2) \mid \tau(1)=s  \}|}{2^{n+1+\lceil \log_2
\left((4n+2)^{2n}\right) \rceil}}
  \end{equation}  
\end{definition}

\begin{proposition}[Convergence of $m_k(s)$ to
  $m(s)$]\label{prop:convergenve} 
  \[\sum_{s\in(0+1)^\star} \mid m(s) - m_k(s) \mid \ \ \leq
\frac{1}{2^{k}} 
  \]
\end{proposition}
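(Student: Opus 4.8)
The plan is to observe first that every term in the series~\eqref{eq:2} is non-negative, so $m_k(s)$ is a partial sum of $m(s)$ and the difference is exactly the tail
\[
m(s)-m_k(s)=\sum_{n=k+1}^\infty \frac{|\{\tau\in(n,2)\mid\tau(0)=s\}|+|\{\tau\in(n,2)\mid\tau(1)=s\}|}{2^{n+1+\lceil\log_2\left((4n+2)^{2n}\right)\rceil}}\ \geq\ 0 .
\]
In particular the absolute value on the left-hand side is redundant, and the quantity to bound is simply $\sum_s\bigl(m(s)-m_k(s)\bigr)$.

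Next I would interchange the order of the two summations. Since all summands are non-negative, Tonelli's theorem justifies swapping $\sum_s$ and $\sum_{n=k+1}^\infty$ with no convergence worries. After the swap the inner sum becomes, for each fixed $n$,
\[
\sum_s\Bigl(|\{\tau\in(n,2)\mid\tau(0)=s\}|+|\{\tau\in(n,2)\mid\tau(1)=s\}|\Bigr).
\]
The key combinatorial observation is that, for a fixed blank symbol $i\in\{0,1\}$, summing $|\{\tau\in(n,2)\mid\tau(i)=s\}|$ over all output strings $s$ merely counts every machine in $(n,2)$ that halts on blank tape $i$, since a halting machine yields exactly one output. Hence each of the two inner sums is at most $|(n,2)|=(4n+2)^{2n}$ by Proposition~\ref{prop:n2enumer}, and the bracketed expression is bounded by $2(4n+2)^{2n}$.

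The remaining step is to cancel this numerator against the denominator. Here I would use the elementary inequality $2^{\lceil\log_2 x\rceil}\geq x$, applied with $x=(4n+2)^{2n}$, which gives $(4n+2)^{2n}/2^{\lceil\log_2((4n+2)^{2n})\rceil}\leq 1$. Each term of the tail series is therefore bounded by $2(4n+2)^{2n}/2^{n+1+\lceil\log_2((4n+2)^{2n})\rceil}\leq 2/2^{n+1}=1/2^{n}$, and summing the geometric tail $\sum_{n=k+1}^\infty 2^{-n}=2^{-k}$ yields the claim. None of the steps is genuinely hard; the only points deserving a word of care are the Tonelli interchange and the recognition that the ceiling in the encoding length is exactly what makes the numerator $(4n+2)^{2n}$ collapse, so that the residual error is governed purely by the geometric factor $2^{-(n+1)}$ coming from the unary encoding of $n$.
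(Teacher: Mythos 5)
Your proposal is correct and follows essentially the same route as the paper's proof: bound the total error by the tail of the series, bound the numerator for each $n$ by the total count $2(4n+2)^{2n}$ of computations, cancel using $2^{\lceil\log_2 x\rceil}\geq x$, and sum the geometric tail to get $1/2^k$. The only difference is that you make explicit the interchange of summations (Tonelli) and the counting argument that the paper leaves implicit, which is a welcome clarification rather than a deviation.
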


\begin{proof}
  By~\eqref{eq:2} and~\eqref{eq:8},
  \begin{eqnarray*}
\sum_{s\in(0+1)^\star} \mid m(s) - m_k(s) \mid & = & 
\sum_{s\in(0+1)^\star} m(s) \ \ - \sum_{s\in(0+1)^\star} m_k(s)
 \\
& \leq  & \sum_{n=k+1}^\infty \frac{2 (4n+2)^{2n}}{2^{n+1+\lceil \log_2
\left((4n+2)^{2n}\right) \rceil}} \\ & \leq &
\sum_{n=k+1}^\infty \frac{2 (4n+2)^{2n}}{2^n \cdot 2 \cdot 2^{\log_2
\left((4n+2)^{2n}\right)}}\\
& = & \sum_{n=k+1}^\infty \frac{1}{2^n} = \frac{1}{2^k}
  \end{eqnarray*}
\end{proof}

Proposition~\ref{prop:convergenve} ensures that the sum of the error
 in $m_k(s)$ as an approximation to $m(s)$, for all strings $s$, 
decreases exponentially with $k$. The question of this convergence was first broached in~\cite{delahaye2007}. The bound of $1/2^k$ has only 
theoretical value; in practice we can find lower bounds. In fact, the 
proof counts all $2(4n+2)^{2n}$ programs of size $n$ to bound the 
error (and many of them do not halt). In 
Section~\ref{sec:error-calculation} we provide a finer error 
calculation for $m_5$ by removing from the count some very trivial 
machines that do not halt. 

\subsection{Properties of $m$ and $m_k$}
\label{sec:properties-ms-m_ks}

Levin's distribution is characterized by some important 
properties. First, it is lower semi-computable, that is, it is 
possible to compute lower bounds for it. Also, it is a semi-measure, 
because the sum of 
probabilities for all strings is smaller than $1$. The key property of 
Levin's distribution is 
its \textit{universality}: a semi-measure $P$ is universal if and only if for every 
other semi-measure $P'$ there exists a constant $c>0$ (that may depend only on $P$ and $P'$) such that for every string $s$, $c\cdot P(s)\geq P'(s)$. That is, a distribution is universal if and only if it dominates (modulo a multiplicative constant) every other semi-measure. In this section we present some results pertaining to the computational properties of $m$ and $m_k$. 

\begin{proposition}[Runtime bound]\label{prop:runtimeBound}
Given any binary string $s$, a machine with $k$ states producing $s$ runs a maximum of $2^{|s|} \cdot |s| \cdot k$ steps
 upon halting or never halts.
\end{proposition}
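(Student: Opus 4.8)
The plan is to bound the number of configurations a halting machine can pass through when it outputs $s$, and then to conclude by determinism together with the pigeonhole principle. The key observation is that, by the output convention of Definition~\ref{def:TMexec}, a machine $\tau\in(k,2)$ whose execution produces $s$ visits exactly $|s|$ tape cells. Since a single-tape head moves one cell at a time, the set of cells it ever visits is a contiguous interval; as this interval has exactly $|s|$ cells, the head is confined to this fixed block of $|s|$ cells throughout the entire (halting) run, and every cell outside the block stays blank.

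First I would define a \emph{configuration} as the triple consisting of the current non-halting state, the head position, and the tape contents. Restricting to the block of $|s|$ relevant cells, the contents outside are irrelevant (always blank), so a configuration is determined by the state (one of the $k$ non-halting states $1,\ldots,k$), the head position (one of $|s|$ cells), and the block contents (a binary word of length $|s|$, hence $2^{|s|}$ possibilities). This bounds the number of distinct non-halting configurations by $2^{|s|}\cdot|s|\cdot k$.

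Next, because machines in $(k,2)$ are deterministic, each configuration has a unique successor, so the configurations $c_0,c_1,c_2,\ldots$ traversed before halting must all be distinct: a repetition $c_i=c_j$ with $i<j$ would force the computation into a cycle and prevent it from ever reaching the halting state. Consequently the number of steps taken before halting, which equals the number of these pairwise-distinct non-halting configurations, cannot exceed $2^{|s|}\cdot|s|\cdot k$. A run that has not halted after $2^{|s|}\cdot|s|\cdot k$ steps must therefore have revisited a configuration and so loops forever, which is exactly the ``never halts'' alternative.

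The step I expect to be the main obstacle is the confinement claim underlying the configuration count: one must be certain that throughout the halting run the head never strays beyond the block of $|s|$ cells, so that the count is governed by $|s|$ and not by the a priori unbounded region the head could in principle sweep before halting. This is precisely where the Busy-Beaver output convention does the work, tying the size of the relevant tape region to the output length $|s|$ and making the bound independent of the runtime.
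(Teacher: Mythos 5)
Your proof is correct and takes essentially the same route as the paper's: you count instantaneous descriptions (one of $k$ states, one of the $|s|$ output cells under the head, and $2^{|s|}$ possible contents of those cells) and use determinism to rule out repeated configurations in a halting run, which is exactly the paper's argument stated with the pigeonhole step made explicit. The only caveat is your closing sentence: a run exceeding the bound need not loop forever---it may have strayed beyond the $|s|$-cell block and could still halt, but then with an output longer than $s$---and the paper words this as the disjunction ``it would produce a string with a greater length (visiting more than $|s|$ cells) or enter a loop,'' which is the precise form needed; either way such a machine cannot produce $s$, so your conclusion stands.
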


\begin{proof}
 Suppose that a machine $\tau$ produces $s$. We can trace back the computation of $\tau$ upon halting by looking at the portion of $|s|$ cells in the tape that will constitute the output. Before each step, the machine may be in one of $k$ possible states, reading one of the $|s|$ cells. Also, the $|s|$ cells can be filled in $2^{|s|}$ ways (with a $0$ or $1$ in each cell). This makes for $2^{|s|} \cdot|s| \cdot k$ different possible instantaneous descriptions of the computation. So any machine may run, at most, that number of steps in order to produce $s$. Otherwise, it would produce a string with a greater length (visiting more than $|s|$ cells) or enter a loop.

\end{proof}

Observe that a key property of our output convention is that we use
 all visited cells in the machine tape. This is what gives us the 
runtime bound which serves to prove the most important property of
$m_k$, its computability (Theorem~\ref{prop:computMK}). 

\begin{theorem}[Computability of $m_k$]\label{prop:computMK}
  Given $k$ and $s$, the value of $m_k(s)$ is computable.
\end{theorem}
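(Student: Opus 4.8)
The plan is to show that $m_k(s)$ can be computed by an algorithm that enumerates all the relevant Turing machines, simulates each one for a bounded number of steps, and tallies the contributions to the sum in~\eqref{eq:8}. The formula defining $m_k(s)$ is a finite sum over $n$ from $1$ to $k$, and for each $n$ the numerator counts machines in $(n,2)$ that produce $s$ on blank tape $0$ or $1$. Since $(n,2)$ is finite with exactly $(4n+2)^{2n}$ machines (Proposition~\ref{prop:n2enumer}), the total number of machine-executions to inspect across all $n\leq k$ is finite and can be enumerated using the indexing $\tau^n_t$. The denominators are fixed integer powers of $2$ that are trivially computable, so the entire difficulty reduces to deciding, for each individual execution $\tau^n_t(i)$, whether it halts with output exactly $s$.

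First I would fix $n$ and $i$ and iterate $t$ from $0$ to $(4n+2)^{2n}-1$, simulating each machine $\tau^n_t(i)$ on the corresponding blank tape. The central observation is that we do not face the full halting problem here: by Proposition~\ref{prop:runtimeBound}, any machine with $n\leq k$ states that produces the specific string $s$ must halt within $2^{|s|}\cdot|s|\cdot k$ steps. Therefore I would run each simulation for exactly this many steps. If the machine halts within the bound and its output (the concatenation of symbols in the visited cells) equals $s$, I increment the appropriate counter; if it has not halted by step $2^{|s|}\cdot|s|\cdot k$, or halts with a different output, it contributes nothing and I discard it. This converts the undecidable-looking question ``does $\tau^n_t(i)$ ever halt with output $s$?'' into the decidable question ``does $\tau^n_t(i)$ halt with output $s$ within a fixed, explicitly computable number of steps?''.

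Having obtained the two counts $|\{\tau\in(n,2)\mid\tau(0)=s\}|$ and $|\{\tau\in(n,2)\mid\tau(1)=s\}|$ for each $n\leq k$, I would form the finite rational sum~\eqref{eq:8} by dividing each numerator by its power-of-two denominator and adding the $k$ terms. Each step---enumerating machines, simulating for a bounded time, comparing output strings, and summing finitely many rationals---is effective, so $m_k(s)$ is computable.

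The main obstacle, and the crux of the argument, is justifying that a machine failing to halt within the $2^{|s|}\cdot|s|\cdot k$ step bound genuinely contributes nothing to $m_k(s)$: I must invoke Proposition~\ref{prop:runtimeBound} to guarantee that every machine producing $s$ halts within this bound, so that no genuine contribution is ever missed by truncating the simulation. In other words, the correctness hinges entirely on the runtime bound being a true upper bound for all output-$s$ computations, which is precisely what lets us sidestep undecidability; the remaining bookkeeping is routine.
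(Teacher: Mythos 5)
Your proposal is correct and follows essentially the same approach as the paper's proof: enumerate the finitely many machines in $(n,2)$ for $n \leq k$ (Proposition~\ref{prop:n2enumer}), simulate each on both blank symbols up to the runtime bound of Proposition~\ref{prop:runtimeBound}, and assemble the finite rational sum~\eqref{eq:8}. Your write-up simply spells out in more detail the bookkeeping that the paper's three-sentence proof leaves implicit.
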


\begin{proof}
  According to~\eqref{eq:8} and Proposition~\ref{prop:n2enumer}, there is a finite number of machines involved in the computation of $m_k(s)$. Also, Proposition~\ref{prop:runtimeBound} sets the maximum runtime for any of these machines in order to produce $s$. So an algorithm to compute $m_k(s)$ enumerates all machines in $(n,2)$,  $1\leq n\leq k$ and runs each machine to the corresponding bound. 
\end{proof}

\begin{corollary}\label{coro:computableMini}
  Given a binary string $s$, the minimum $k$ with $m_k(s)>0$
 is computable. 
\end{corollary}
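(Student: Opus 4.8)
The plan is to turn the defining sum~\eqref{eq:8} into a terminating search. First I would observe that every summand in~\eqref{eq:8} is non-negative, so $m_k(s)>0$ holds if and only if there is some $n\le k$ and some $\tau\in(n,2)$ with $\tau(0)=s$ or $\tau(1)=s$; equivalently, $s$ is produced by at least one machine of at most $k$ states. In particular the sequence $\bigl(m_k(s)\bigr)_{k\ge 1}$ is non-decreasing, so once $m_k(s)>0$ for some $k$ it remains positive for all larger indices, and ``the minimum $k$ with $m_k(s)>0$'' is well defined exactly when the sequence is eventually positive.

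The second ingredient is the computability of each term: by Theorem~\ref{prop:computMK}, for every fixed $k$ the value $m_k(s)$ can be computed exactly, so in particular one can decide whether $m_k(s)>0$. This already furnishes a semi-decision procedure: run the algorithm of Theorem~\ref{prop:computMK} on $k=1,2,3,\dots$ in turn and return the first $k$ for which the computed value is positive.

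What remains --- and this is where the only real content lies --- is to guarantee that this search halts, i.e.\ that some finite $k$ with $m_k(s)>0$ exists. For the empty string this fails, since under our output convention the head visits at least one cell before halting, so every output is non-empty and $m(\varepsilon)=0$; I would therefore state the corollary for non-empty $s$, as elsewhere in the paper. For non-empty $s$ the existence was already noted: every finite non-empty binary string is the output of some program. To make the search a genuinely bounded (hence total) algorithm, I would make this explicit by exhibiting a witnessing machine --- for instance one that writes the bits of $s$ one at a time while moving right and then enters the halting state, giving $\tau(i)=s$ for a suitable blank symbol $i$ with a number of states controlled by $|s|$. This bounds the minimum $k$ a priori, so the loop over $k$ terminates and returns the correct value.

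The main obstacle is precisely this termination/existence step: the computability of the individual $m_k(s)$ only yields a procedure that halts \emph{provided} a positive term exists, so one must argue independently that a witness always exists (and, cleanly, bound where it first appears). Everything else is bookkeeping: non-negativity of the summands gives both the monotonicity of $\bigl(m_k(s)\bigr)_{k\ge1}$ and the iff-characterization of positivity, while Theorem~\ref{prop:computMK} supplies the per-stage computation.
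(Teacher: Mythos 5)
Your proof is correct and takes essentially the same route as the paper's: both arguments exhibit the trivial machine with $|s|$ states that writes $s$ symbol by symbol while moving right and then halts, thereby bounding the minimal $k$ by $|s|$, and both make the search effective via the runtime bound (which the paper invokes directly through Proposition~\ref{prop:runtimeBound}, and which you invoke packaged as Theorem~\ref{prop:computMK}). Your explicit exclusion of the empty string is a small but valid refinement that the paper leaves implicit, restricting to non-empty strings only later in Theorem~\ref{teo:computM}.
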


\begin{proof}
  Trivially, $s$ can be produced by a Turing machine with $|s|$
  states in just $s$ steps. At each step $i$, this machine writes the
  $i$\textsuperscript{th} symbol of $s$, moves to the right and
 changes to a new state. When all symbols of $s$ have been written,
 the machine halts. So, to get the minimum $k$ with $m_k(s)>0$, we
 can enumerate all machines in $(n,2)$, $1\leq n\leq |s|$ and run all
  of them up to the runtime bound given by
  Proposition~\ref{prop:runtimeBound}. The first machine producing $s$
  (if the machines are enumerated from smaller to larger size) gives the
  value of $k$. 
\end{proof}

Now, some uncomputability results of $m_k$

\begin{proposition}\label{prop:uncompLongest}
  Given $k$, the length of the longest $s$ with $m_k(s)>0$ is
  non-computable. 
\end{proposition}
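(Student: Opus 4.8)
The plan is to reduce the undecidability of the blank-tape halting problem to the computation of this length, using the runtime bound of Proposition~\ref{prop:runtimeBound}. Write $L(k)$ for the length of the longest string $s$ with $m_k(s)>0$. This quantity is well defined because, by Proposition~\ref{prop:n2enumer}, only finitely many machines contribute to $m_k(s)$, so only finitely many strings $s$ satisfy $m_k(s)>0$ and a longest one exists. The claim to establish is that the function $k\mapsto L(k)$ is not computable: although each individual value exists, no single algorithm outputs $L(k)$ uniformly in $k$, exactly in the spirit of the Busy Beaver function.

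First I would assume, towards a contradiction, that $L$ is computable. The crucial observation is that $L(k)$ controls the running time of \emph{every} halting machine with at most $k$ states. Indeed, any such machine that halts produces some output $s$ with $|s|\leq L(k)$, and by Proposition~\ref{prop:runtimeBound} it then halts within $2^{|s|}\cdot|s|\cdot k$ steps. Since this expression is monotone in $|s|$, every halting machine in $\bigcup_{n=1}^k (n,2)$ halts within $B(k):=2^{L(k)}\cdot L(k)\cdot k$ steps. If $L$ is computable, then so is $B$.

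Next I would turn $B$ into a halting decision procedure. Given an arbitrary machine $\tau$ with $n$ states running over a blank tape, I compute $B(n)$ and simulate $\tau$ for $B(n)$ steps: by the previous paragraph, $\tau$ halts if and only if it halts within these $B(n)$ steps, so the simulation decides halting. This contradicts the undecidability of the halting problem for machines running on a blank tape in the Busy Beaver formalism, and therefore $L$ cannot be computable.

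The step I expect to be the main obstacle is the second one, namely converting the output-length quantity $L(k)$ into a \emph{uniform} runtime bound. The subtlety is that Proposition~\ref{prop:runtimeBound} bounds the runtime in terms of the output length $|s|$ of one particular machine, whereas I need a single bound valid simultaneously for all halting machines with up to $k$ states; this works precisely because the output convention counts all visited cells, so that a bound on output length yields a bound on the number of distinct instantaneous descriptions and hence on the runtime. I would also be careful to phrase the result as the non-computability of the function $k\mapsto L(k)$, since for any fixed $k$ the value is merely an (unknown) constant and the content of the statement lies entirely in its uniform dependence on $k$.
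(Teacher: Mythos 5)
Your proof is correct and takes essentially the same route as the paper: assume the length function is computable, combine it with the runtime bound of Proposition~\ref{prop:runtimeBound} to obtain a computable bound on the running time of every halting machine with at most $k$ states, and derive a contradiction with the uncomputability of the Busy Beaver (the paper cites the fact that maximal runtimes grow faster than any computable function, while you unfold this into an explicit decision procedure for blank-tape halting --- the same argument in slightly more detail).
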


\begin{proof}
  We proceed by contradiction. Suppose that such a computable function as $l(k)$ gives the length of the longest $s$ with $m_k(s)>0$. Then ?$l(k)$, together with the runtime bound in Proposition~\ref{prop:runtimeBound}, provides a computable function that gives the maximum runtime that a machine in $(k,2)$ may run prior to halting. But it contradicts the uncomputability of the Busy
 Beaver~\cite{rado}: the highest runtime of halting machines in
 $(k,2)$ grows faster than any computable function. 
\end{proof}

\begin{corollary}\label{prop:uncompHowMany}
  Given $k$, the number of different strings $s$ with $m_k(s)>0$ is
 non-computable.
\end{corollary}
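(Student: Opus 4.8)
The plan is to argue by contradiction and reduce to Proposition~\ref{prop:uncompLongest}, which already establishes that the length of the longest string with $m_k(s)>0$ is non-computable. The central observation is that the set $S_k = \{s : m_k(s)>0\}$ is finite and computably enumerable, and that knowing its cardinality is precisely what is needed to detect when an enumeration of it is complete.

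First I would note that $S_k$ is finite: by Proposition~\ref{prop:n2enumer} there are only finitely many machines in $\bigcup_{n=1}^k (n,2)$, and each halting machine contributes at most two output strings (one for each blank symbol $i\in\{0,1\}$). Next I would describe an effective enumeration of $S_k$: dovetail the simulation of all machines in $(n,2)$ for $1\le n\le k$ on both blank tapes, and each time one of these simulations halts, record its output string. Since every halting machine halts after finitely many steps, every element of $S_k$ is eventually recorded; the only obstruction to \emph{deciding} membership in $S_k$ is that we cannot in general tell when every element has appeared, precisely because some machines never halt.

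Now suppose, for contradiction, that the number $N(k) = |S_k|$ were computable. Then I would run the enumeration above until exactly $N(k)$ distinct strings have been recorded. This procedure necessarily terminates, because the enumeration eventually lists all $N(k)$ elements of $S_k$ and never lists more. Upon termination we possess the entire set $S_k$ explicitly, and can simply compute the maximum of $|s|$ over $s\in S_k$. This yields a computable function giving the length of the longest string with $m_k(s)>0$, contradicting Proposition~\ref{prop:uncompLongest}.

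The main subtlety to get right is the justification that the enumeration of $S_k$ is genuinely effective and eventually exhaustive, i.e.\ that dovetailing over a finite family of machines does reach the halting time of every halting machine. This is routine once one notes the family is finite, so that no uniform runtime bound (which would itself be non-computable, as in Proposition~\ref{prop:uncompLongest}) is needed: completion of the enumeration is detected purely through the assumed computability of the cardinality $N(k)$, which is exactly what the contradiction exploits.
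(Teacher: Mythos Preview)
Your proposal is correct and follows essentially the same argument as the paper: assume the cardinality is computable, enumerate outputs by running all relevant machines in parallel until that many distinct strings have appeared, then read off the longest one to contradict Proposition~\ref{prop:uncompLongest}. Your version is more explicit about finiteness and the dovetailing mechanism, but the underlying idea and the reduction target are identical.
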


\begin{proof}
  Also by contradiction: If the number of different strings with
  $m_k(s)>0$ is computable, we can run in parallel all machines in
 $(k,2)$ until the corresponding number of different strings has
 been found. This gives us the longest string, which is in contradiction to
 Proposition~\ref{prop:uncompLongest}. 
\end{proof}

Now to the key property of $m$, its computability,

\begin{theorem}[Computability of $m$]\label{teo:computM}
Given any non-empty binary string, $m(s)$ is computable. 
\end{theorem}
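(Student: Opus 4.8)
The plan is to interpret computability of the real-valued function $m$ in the standard sense for computable analysis: $m(s)$ is a real number in $[0,1]$, and to say it is computable means that there is an algorithm which, given $s$ and a desired precision, outputs a rational approximation guaranteed to lie within that precision of $m(s)$. With this reading the theorem becomes an immediate consequence of two facts already established, namely the computability of each finite approximation (Theorem~\ref{prop:computMK}) and the exponential tail bound (Proposition~\ref{prop:convergenve}).

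First I would fix a target error $\varepsilon>0$ (equivalently, a number of output bits of precision) and observe that the truncation error for the single string $s$ is controlled by the summed bound. Since $m_k(s)$ is a partial sum of the nonnegative series defining $m(s)$ in~\eqref{eq:2}, every difference $m(s)-m_k(s)$ is itself nonnegative; hence the individual term is dominated by the total, giving
\[
0 \ \leq\ m(s)-m_k(s)\ \leq\ \sum_{s'\in(0+1)^\star}\bigl(m(s')-m_k(s')\bigr)\ \leq\ \frac{1}{2^{k}},
\]
where the last inequality is exactly Proposition~\ref{prop:convergenve}. This is the step that converts the \emph{global} convergence statement into an \emph{effective, per-string} handle on the infinite tail.

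Next I would simply choose $k$ large enough that $1/2^{k}<\varepsilon$, for instance any $k>\log_2(1/\varepsilon)$, and then invoke Theorem~\ref{prop:computMK} to compute $m_k(s)$ exactly: this requires enumerating the finitely many machines in $(n,2)$ for $1\leq n\leq k$ and running each for at most $2^{|s|}\cdot|s|\cdot n$ steps (Proposition~\ref{prop:runtimeBound}), counting those that output $s$ from blank tape $0$ or $1$. By the displayed inequality, $m_k(s)$ is then within $\varepsilon$ of $m(s)$, so the combined procedure approximates $m(s)$ to arbitrary precision and $m$ is computable.

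I expect the only genuine subtlety — rather than a true obstacle — to be conceptual: making explicit the sense in which a semi-measure taking real values can be ``computable,'' and justifying the passage from the sum of errors over all strings to the error for the one string of interest. The monotonicity of the partial sums makes that passage trivial, and everything else is a direct citation, so no heavy computation is needed; the proof is essentially the assembly of Proposition~\ref{prop:convergenve} and Theorem~\ref{prop:computMK}.
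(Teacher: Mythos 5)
Your proposal is correct and follows essentially the same route as the paper's own proof: both combine the computability of $m_k(s)$ (Theorem~\ref{prop:computMK}) with the tail bound of Proposition~\ref{prop:convergenve} to produce arbitrarily precise approximations to $m(s)$. In fact you are slightly more careful than the paper, since you explicitly justify the passage from the global bound $\sum_{s'}(m(s')-m_k(s')) \leq 1/2^k$ to the per-string bound $0 \leq m(s)-m_k(s) \leq 1/2^k$ via nonnegativity of the terms, a step the paper takes for granted.
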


\begin{proof}
  As we argued in the proof of Corollary~\ref{coro:computableMini}, a
 non-empty binary string $s$ can be produced by a machine with $|s|$
 states. Trivially, it is then also produced by machines with more
 than $|s|$ states. So for every non-empty string $s$, the value of $m(s)$, according to~\eqref{eq:2}, is the sum of enumerable infinite many rationals which produce a real number. A real number is
 computable if, and only if, there is some algorithm that, given $n$,
 returns the first $n$ digits of the number. And this is what $m_k(s)$
  does. Proposition~\ref{prop:convergenve} enables us to calculate the
 value of $k$ such that $m_k(s)$ provides the required digits of
 $m(s)$, as $m(s)-m_k(s)$ is bounded by $1/2^k$. 
\end{proof}

The subunitarity of $m$ and $m_k$ implies that the sum of $m(s)$
(or $m_k(s)$) for all strings $s$ is smaller than one. This is 
because of the non-halting machines: 

\begin{proposition}[Subunitarity] \label{prop:subunit}
  The sum of $m(s)$ for all strings
  $s$ is smaller than $1$, that is,
  \[
  \sum_{s\in (0+1)^\star} m(s) < 1
  \]
\end{proposition}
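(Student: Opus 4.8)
The plan is to rewrite the total mass $\sum_{s} m(s)$ as a sum of $2^{-|p|}$ taken over the \emph{halting} programs, and then to compare this against the same sum taken over \emph{all} valid programs, which the prefix-free encoding forces to be at most $1$. The difference between the two sums is precisely the mass carried by the non-halting programs, so producing a single non-halting program of positive weight will convert the inequality ``$\leq 1$'' into the strict ``$<1$'' we want.

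First I would record the reindexing identity. By the definition of $P_{\mathcal{M}}$ in~\eqref{eq:12}, each halting program contributes $2^{-|p|}$ to exactly one string $s$ (its output, which always has length $\geq 1$ since the head visits at least its starting cell). Hence summing over all outputs collects every halting program exactly once:
\[
\sum_{s\in(0+1)^\star} m(s) \;=\; \sum_{p:\,\mathcal{M}(p)\downarrow} \frac{1}{2^{|p|}},
\]
where $\mathcal{M}(p)\downarrow$ abbreviates ``$p$ halts.'' Note there is no genuine double counting at the program level: a Turing machine that produces $s$ with both blank symbols yields two \emph{distinct} programs $\langle n,0,t\rangle$ and $\langle n,1,t\rangle$, each summed once.

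Next I would invoke the prefix-free property of the binary encoding established in Section~\ref{sec:turing-machines-as}. By Kraft's inequality, a prefix-free set of codewords satisfies $\sum_{p} 2^{-|p|} \le 1$, the sum ranging over all valid programs; equivalently, this is the statement that the disjoint coin-flip events of Proposition~\ref{prop:coinFlips}, one per program, have total probability at most $1$. Splitting the valid programs into the halting set $H$ and the non-halting set $\overline{H}$,
\[
\sum_{s} m(s) \;=\; \sum_{p\in H} \frac{1}{2^{|p|}} \;=\; \sum_{p} \frac{1}{2^{|p|}} \;-\; \sum_{p\in \overline{H}} \frac{1}{2^{|p|}} \;\le\; 1 - \sum_{p\in\overline{H}} \frac{1}{2^{|p|}}.
\]

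Finally I would exhibit at least one non-halting program so that $\sum_{p\in\overline{H}} 2^{-|p|}>0$. The machine in $(1,2)$ whose initial transition leaves it in state $1$ never reaches the halting state $0$; it is a valid program $p_0$ (indeed $t<6^2=36$) of finite length, so $\sum_{p\in\overline{H}} 2^{-|p|} \ge 2^{-|p_0|} > 0$, and the out-of-range values of $t$ associated with trivial non-halting machines in Proposition~\ref{prop:coinFlips} would serve equally well. Substituting gives $\sum_{s} m(s) \le 1 - 2^{-|p_0|} < 1$. The only step demanding real care is the passage to the Kraft bound: since the code is prefix-free but not necessarily complete (some bit strings of the allotted $t$-length decode to out-of-range indices), one gets only ``$\le 1$'' rather than ``$=1$'' for the total over valid programs, but this is exactly what the argument needs, and strictness then comes entirely from the existence of $p_0$. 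The same reasoning applies verbatim to each finite approximation $m_k$.
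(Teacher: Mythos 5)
Your proof is correct, but it takes a genuinely different route from the paper's. The paper works directly from the explicit formula~\eqref{eq:2}: it writes $\sum_s m(s)$ as a sum over the number of states $n$, observes that the numerator (halting computations in $(n,2)$) is \emph{strictly} less than $2(4n+2)^{2n}$ because trivial non-halting machines exist at every $n$, and then bounds the whole series by $\sum_{n\geq 1} 2^{-n} = 1$ using $2^{\lceil \log_2 ((4n+2)^{2n})\rceil} \geq (4n+2)^{2n}$. You instead reindex $\sum_s m(s)$ as $\sum_{p \in H} 2^{-|p|}$ over halting programs, invoke Kraft's inequality for the prefix-free encoding of Definition~\ref{deF:progrEncod} to get $\sum_p 2^{-|p|} \leq 1$ over \emph{all} programs, and obtain strictness by exhibiting a single concrete non-halting program in $(1,2)$. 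Your argument is more conceptual and more general: it would apply verbatim to any prefix-free program encoding without ever touching the counting formula, it cleanly separates the ``$\leq 1$'' part (a consequence of prefix-freeness alone) from the ``$<1$'' part (existence of one non-halting program), and it avoids the ceiling-of-logarithm manipulations. What the paper's counting approach buys in exchange is self-containment (no appeal to Kraft's inequality) and reusability: the same per-$n$ counting of trivial non-halting machines is exactly what is refined later in the error calculation of Section~\ref{sec:error-calculation}, where classes of non-halting machines are subtracted to get sharper numerical bounds. One small point of care in your version: the strict per-term inequality the paper uses holds for every $n$, whereas you only need (and only claim) positivity of the non-halting mass for one program --- both suffice, since all terms are non-negative and the dominating series converges. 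Your closing remark that the argument transfers to $m_k$ is also right, and matches the paper's corollary, which derives it from $m_k(s) \leq m(s)$ termwise.
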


\begin{proof}
  By using~\eqref{eq:2},
  \begin{equation}
\label{eq:4}
\sum_{s\in (0+1)^\star} m(s) = \sum_{n=1}^\infty \frac{|\{
  \tau\in(n,2) \mid \tau(0)\downarrow
  \}|+|\{  \tau\in(n,2) \mid \tau(1)\downarrow  \}|}{2^{n+1+\lceil \log_2
\left((4n+2)^{2n}\right) \rceil}} 
  \end{equation}
  but $|\{\tau\in(n,2) \mid \tau(0)\downarrow \}|+|\{  \tau\in(n,2)
  \mid \tau(1)\downarrow  \}|$ is the number of machines in $(n,2)$
  that halt when starting with a blank tape filled with $0$ plus the number
  of machines in $(n,2)$ that halt when starting on a blank tape
 filled with $1$. This number is at most twice the cardinality of
 $(n,2)$, but we know that it is smaller, as there are very trivial
 machines that do not halt, such as those without transitions to the
  halting state, so
  \begin{eqnarray*}
\sum_{s\in (0+1)^\star} m(s) & < & \sum_{n=1}^\infty \frac{2
  (4n+2)^{2n}}{2^{n+1+\lceil \log_2 
\left((4n+2)^{2n}\right) \rceil}} = \sum_{n=1}^\infty
\frac{(4n+2)^{2n}}{2^n\cdot 2^{\lceil \log_2 
\left((4n+2)^{2n}\right) \rceil}} \\
& \leq & \sum_{n=1}^\infty \frac{(4n+2)^{2n}}{2^n
(4n+2)^{2n}} = \sum_{n=1}^\infty \frac{1}{2^n} = 1
  \end{eqnarray*}
\end{proof}

\begin{corollary}
  The sum of $m_k(s)$ for all strings $s$ is smaller than $1$
\end{corollary}

\begin{proof}
  By Proposition~\ref{prop:subunit}, \eqref{eq:2} and~\eqref{eq:8}.
\end{proof}

The key property of $m_k(s)$ and $m(s)$ is their computability, given 
by Propositions~\ref{prop:computMK} and~\ref{teo:computM},
respectively. So these distributions cannot be universal, as Levin's 
Universal Distribution is non-computable. In spite of this, the 
computability of our distributions (and the possibility of
 approximating them with a reasonable computational effort), as we have shown, provides us with a tool to approximate the algorithmic probability of short binary strings. In some sense this is 
similar to what happens with other (computable) approximations to
(uncomputable) Kolmogorov complexity, such as common lossless compression algorithms, which in turn are estimators of the classical Shannon entropy rate (e.g. all those based in LZW, and unlike $m_k(s)$ and $m(s)$, are not able to find algorithmic content beyond statistical patterns, not even in principle, unless a compression algorithm is designed to seek a specific one. For example, the digital 
expansion of the mathematical constant $\pi$ is believed to be normal and therefore will contain no statistical patterns of the kind that
compression algorithms can detect, yet there will be a (short)
computer program that can generate it, or at least finite (and small)
initial segments of $\pi$.

\section{Computing $m_5$}
\label{sec:experiment}

We have explored the sets of Turing machines in $(n,2)$ for $n\leq 5$
in previous papers~\cite{delahayezenil,d5}. For $n\leq 4$, the maximum time that a machine
 in $(n,2)$ may run upon hating is known~\cite{brady}. It allows us to 
calculate the exact values of $m_4$. For $n=5$, we have
 estimated~\cite{d5} that 500 steps cover almost the totality of 
halting machines. We have the database of machines producing each 
string $s$ for each value of $n$. So we have applied~\eqref{eq:8} to 
estimate $m_5$ (because we set a low runtime). 

In previous papers~\cite{d5,computability}, we worked with $D(k)$, a 
measure similar to $m_k$, but the denominator of~\eqref{eq:8} 
is the number of (detected) halting machines in $(k,2)$. Using 
$D(5)$ as an approximation to Levin's distribution, algorithmic complexity is 
estimated\footnote{These values can be consulted at
  \texttt{http://www.complexitycalculator.com}. Accessed on June 22, 2017.} by means of the
 algorithmic Coding Theorem~\ref{theo:codTh} as $K_{D(5)}(s) =
-\log_2 D(5)(s)$. Now, $m_5$ provides us with another estimation:
$K_{m_5}(s) = -\log_2 m_5(s)$. Table~\ref{tab:top10strings} shows the
10 most frequent strings in both distributions, together with their
estimated complexity.

\begin{table}[htbp!]
  \centering
  \begin{tabular}{ccc||ccc}
$s$ & $K_{m_5}(s)$ & $K_{D(5)}(s)$ & $s$ & $K_{m_5}(s)$ &
$K_{D(5)}(s)$ \\ \hline\hline 
0 &  3.7671 &  2.5143  &  11 &  6.8255 & 3.3274 \\
1 &  3.7671 &  2.5143 &  000 & 10.4042 &  5.3962 \\
00 & 6.8255 &  3.3274 &  111 & 10.4042 & 5.3962 \\
01 &  6.8255 & 3.3274 &  001 & 10.4264 & 5.4458 \\
10 &  6.8255 & 3.3274 &  011 & 10.4264  & 5.4458 \\
  \end{tabular}
  \caption{Top 10 strings in $m_5$ and $D(5)$ with their estimated
complexity} 
  \label{tab:top10strings}
\end{table}

\begin{figure}[htbp!]
  \centering
  \includegraphics[width=4.5cm]{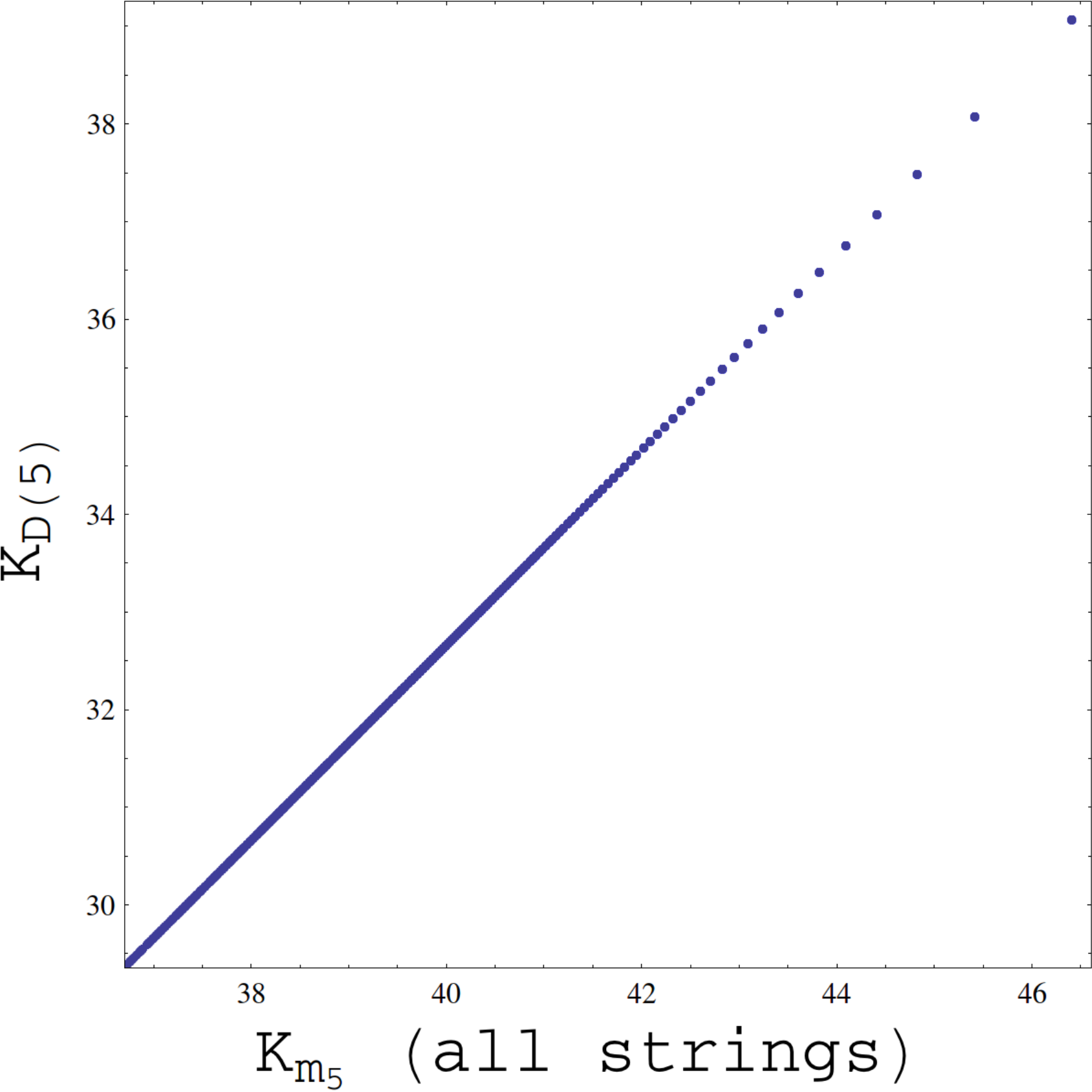}\hspace{1cm}
  \includegraphics[width=4.5cm]{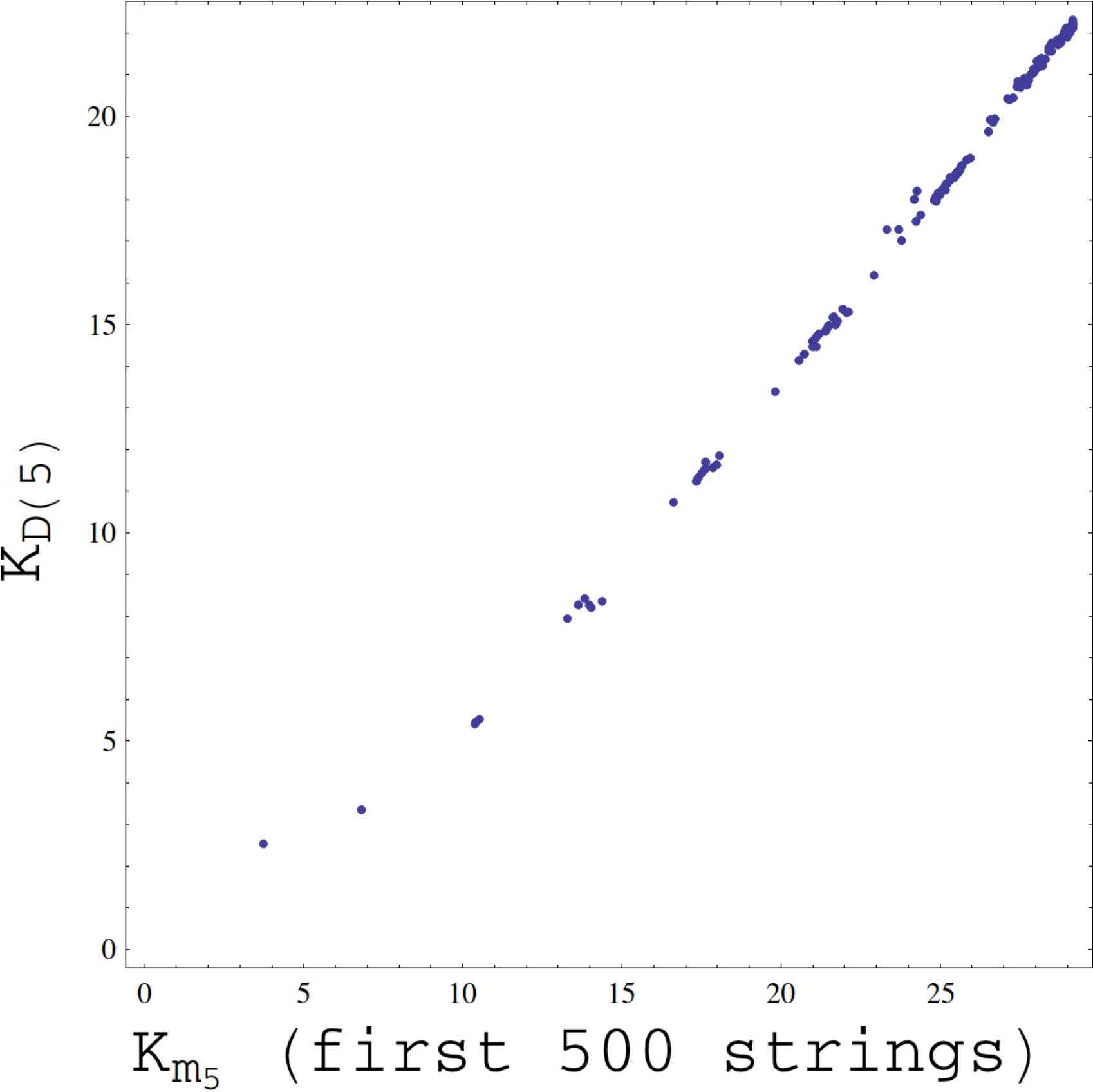}
  \caption{Correlation of rank comparison between $K_{m_5}$ and
$K_{D(5)}$} 
  \label{fig:compRank}
\end{figure}

Figure~\ref{fig:compRank} shows a rank comparison of both estimations
of algorithmic complexity after application of the algorithmic Coding
 Theorem. With minor differences, there is an almost  
perfect agreement. So in classifying strings according to their relative 
algorithmic complexity, the two distributions are equivalent. 

The main difference between $m_k$ and $D(k)$ is that $D(k)$ is not
 computable, because computing it would require us to know the exact number of halting 
machines in $(k,2)$, which is impossible given the halting 
problem. We work with approximations to $D(k)$ by considering the 
number of halting machines detected. In any case, though $m_k$ is 
computable, it is computationally intractable, so in practice
 (approximations to) the two measures can be used interchangeably .

\subsection{Error calculation}
\label{sec:error-calculation}

We can make some estimations about the error in $m_5$ with respect to
$m$.  ``0'' and ``1'' are two very special strings, both with 
the maximum $m_5$ value. These strings are the most frequent outputs in 
$(n,2)$ for $n\leq 5$, and we may conjecture that they are the most 
frequent outputs for all values of $n$. These strings then have the
 greatest absolute error, because the terms in the sum of 
$m(\text{``0''})$ (the argument for $m(\text{``1''})$ is identical) 
not included in $m_5(\text{``0''})$ are always the greatest 
independent of $n$. 

We can calculate the exact value of the terms for
$m(\text{``0''})$ in~\eqref{eq:2}. To 
produce ``0'', starting with a tape filled with $i\in\{0,1\}$, a machine
in $(n,2)$ must have the transition corresponding to the initial state 
and read symbol $i$ with the following instruction: write $0$ and
change to the halting state (thus not moving the head). The other $2n-1$
transitions may have any of the $4n+2$ possible instructions. So there
 are $(4n+2)^{2n-1}$ machines in $(n,2)$ producing ``0'' when running 
on a tape filled with $i$. Considering both values of $i$, we have
$2(4n+2)^{2n-1}$ programs of the same length $n+1+\lceil \log_2
\left((4n+2)^{2n}\right) \rceil$ producing ``0''. Then, for ``0'',
\begin{equation}
  \label{eq:3}
  m(\text{``0''}) = \sum_{n=1}^\infty \frac{2(4n+2)^{2n-1}}{2^{n+1+\lceil \log_2
\left((4n+2)^{2n}\right) \rceil}}
\end{equation}
This can be approximated by
\begin{eqnarray*}
  m(\text{``0''}) & = & \sum_{n=1}^\infty
  \frac{2(4n+2)^{2n-1}}{2^{n+1+\lceil \log_2 \left((4n+2)^{2n}\right)
  \rceil}} \\  
  & = & 
  \sum_{n=1}^\infty
  \frac{2(4n+2)^{2n-1}}{2^{n+1} 2^{\lceil \log_2 \left((4n+2)^{2n}\right)
  \rceil}}\\
  & = & 
   \sum_{n=1}^\infty \frac{(4n+2)^{2n-1}}{2^n 2^{\lceil \log_2 \left((4n+2)^{2n}\right)
  \rceil}}\\
  & = & 
   \sum_{n=1}^{2000} \frac{(4n+2)^{2n-1}}{2^n 2^{\lceil \log_2 \left((4n+2)^{2n}\right)
  \rceil}} +
  \sum_{n=2001}^{\infty} \frac{(4n+2)^{2n-1}}{2^n 2^{\lceil \log_2 \left((4n+2)^{2n}\right)
  \rceil}}
  \\  
  & < &  
   \sum_{n=1}^{2000} \frac{(4n+2)^{2n-1}}{2^n 2^{\lceil \log_2 \left((4n+2)^{2n}\right)
  \rceil}} +
  \sum_{n=2001}^\infty \frac{(4n+2)^{2n-1}}{2^n 2^{\log_2
 \left((4n+2)^{2n}\right)}} \\ 
 & = & 
   \sum_{n=1}^{2000} \frac{(4n+2)^{2n-1}}{2^n 2^{\lceil \log_2 \left((4n+2)^{2n}\right)
   \rceil}} +
   \sum_{n=2001}^\infty \frac{(4n+2)^{2n-1}}{2^n (4n+2)^{2n}} \\
   & =  & 
   \sum_{n=1}^{2000} \frac{(4n+2)^{2n-1}}{2^n 2^{\lceil \log_2 \left((4n+2)^{2n}\right)
   \rceil}} +
   \sum_{n=2001}^\infty \frac{1}{2^n (4n+2)} \simeq 0.0742024
\end{eqnarray*}
we have divided the infinite sum into two intervals cutting at 2000 
because the approximation of $2^{\lceil \log_2 \left((4n+2)^{2n}\right)
  \rceil}$ to $(4n+2)^{2n}$ is not good for low values of $n$, but 
has almost no impact for large $n$. In fact, cutting at 1000 or 
4000 gives the same result with a precision of 17 decimal
 places. We have used \emph{Mathematica} to calculate both the sum
 from $1$ to $2000$ and the convergence from $2001$ to infinity. So 
the value $m(\text{``0''}) = 0.0742024$ is exact for practical purposes. The value of $m_5(\text{``0''})$ is $0.0734475$, so the error in the 
calculation of $m(\text{``0''})$ is $0.0007549$. If ``0'' and ``1'' 
are the strings with the highest $m$ value, as we (informedly) conjecture, then this is the maximum error in $m_5$ as an approximation to $m$. 

As a reference, $K_{m_5}(\text{``0''})$ 
is $3.76714$. With the real $m(\text{``0''})$ value, the approximated complexity is $3.75239$. The difference is not relevant for most 
practical purposes. 

We can also provide an upper bound for the sum of the error in $m_5$ for 
strings different from ``0'' and ``1''. Our way of proceeding is similar to the proof of Proposition~\ref{prop:convergenve}, but we count in a 
finer fashion. The sum of the error for strings different from ``0'' and
``1'' is 

{\footnotesize
  \begin{equation}
\label{eq:9}
  \sum_{n=6}^\infty 
  \frac{|\{ \tau\in(n,2) \mid \tau(0)\downarrow,\
\tau(0)\notin\{\text{``0''},\text{``1''}\} \}|+|\{ \tau\in(n,2)
\mid 
\tau(1)\downarrow,\
\tau(1)\notin\{\text{``0''},\text{``1''}\}
 \}|}{2^{n+1+\lceil \log_2
   \left((4n+2)^{2n}\right) \rceil}} 
  \end{equation}
\normalsize{}}

The numerators of the above sum contain the number of computations
 (with blank symbol ``0'' or ``1'') of Turing machines in $(n,2)$,
$n\geq 6$, that halt and produce an output different from ``0'' and
``1''. We can obtain an upper bound of this value by removing, from the 
set of computations in $(n,2)$, those that produce ``0'' or ``1'' and 
some trivial cases of machines that do not halt. 

First, the number of computations in $(n,2)$ is $2(4n+2)^{2n}$, as all
machines in $(n,2)$ are run twice for both blank symbols (``0" and
``1"). Also, the computations producing ``0'' or ``1'' are $4(4n+2)^{2n-1}$. Now, we
 focus on two sets of trivial non-halting machines:
\begin{itemize}
\item Machines with the initial transition staying at the initial
  state. For blank symbol $i$, there are $4(4n+2)^{2n-1}$ machines
 that when reading $i$ at the initial state do not change the state
 (for the initial transition there are $4$ possibilities, depending
  on the writing symbol and direction, and for the other $2n-1$
  transitions there are $4n+2$ possibilities). These machines will
  keep moving in the same direction without halting. Considering both
 blank symbols, we have $8(4n+2)^{2n-1}$ computations of this kind.
\item Machines without transition to the halting state. To keep the intersection of this and the above set empty, we also consider that the
 initial transition moves to a state different from the initial state. So
 for blank symbol $i$, we have $4(n-1)$ different initial transitions
  ($2$ directions, $2$ writing symbols and $n-1$ states) and $4n$
  different possibilities for the other $2n-1$ transitions. This makes
 a total of $4(n-1)(4n)^{2n-1}$ different machines for blank symbol
  $i$ and $8(n-1)(4n)^{2n-1}$ computations for both blank symbols. 
\end{itemize}
Now, an upper bound for~\eqref{eq:9} is:
\begin{displaymath}
  \sum_{n=6}^\infty 
  \frac{2(4n+2)^{2n} - 4(4n+2)^{2n-1} - 8(4n+2)^{2n-1} -
8(n-1)(4n)^{2n-1}}{2^{n+1+\lceil \log_2 
   \left((4n+2)^{2n}\right) \rceil}}
\end{displaymath}
The result of the above sum is $0.0104282$ (smaller than $1/32$, as
 guaranteed by Proposition~\ref{prop:convergenve}). This is an upper bound of 
the sum of the error $m(s)-m_5(s)$ for all infinite strings $s$ 
different from ``0'' and ``1''. Smaller upper bounds can be found by 
removing from the above sum other kinds of predictable non-halting 
machines.

\section{Algorithmic Complexity of Integer Sequences}

Measures that we introduced based on finite approximations of algorithmic probability have found applications in areas ranging from economics~\cite{zenileco} to human behavior and cognition~\cite{visual, cognition,ploscompbio} to graph theory~\cite{zenilgraph}. We have explored the use of other models of computation suggesting similar and correlated results in output distribution~\cite{zenilalgo} and compatibility, in a range of applications, with general compression algorithms~\cite{computability,kolmo2d}. We also investigated~\cite{d5}  the behavior of the additive constant involved in the Invariance theorem from finite approximations to $D(5)$, strongly suggesting fast convergence and smooth behavior of the invariance constant. In~\cite{zenilgraph} and~\cite{kolmo2d}, we introduced an AP-based measure for 2-dimensional patterns, based on replacing the tape of the reference Turing machine for a 2-dimensional grid. The actual implementation requires breaking any grid into smaller blocks for which we then have estimations of their algorithmic probability according to the Turing machine formalism described in~\cite{bdmpaper,zenilgraph} and~\cite{kolmo2d}.

Here we introduce an application of AP-based measures--as described above--to integer sequences. We show that an AP-based measure constitutes an alternative or complementary tool to lossless compression algorithms, widely used to find estimations of algorithmic complexity.

\subsection{AP-based measure}

The AP-based method used here is based on the distribution $D(5)$ and is defined just like $m_k(s)$. However, to increase its range of applicability, given that $D(5)$ produces all $2^{12}$ bit-strings of length 12 except for 2 (that are assigned maximum values and thus complete the set), we introduce what we call the \textit{Block Decomposition Method} (BDM) that decomposes strings longer than 12 into strings of maximum length 12 that can be derived from $D(5)$. The final estimation of the complexity of a string longer than 12 bits is then the result of the sum of the complexities of the different substrings of length not exceeding 12 in $D(5)$ if they are different, but the sum of only $log_2(n)$ if $n$ substrings are the same. The formula is motivated by the fact that $n$ strings that are the same do not have $n$ times the complexity of one of the strings but rather $\log_2(n)$ times the complexity of just one of the substrings. This is because the algorithmic complexity of the $n$ substrings to be considered is the length of at most the `print($s$) $n$ times' program and not the length of `print($ss\ldots s$)'. We have shown that this measure is a hybrid measure of complexity, providing local estimations of algorithmic complexity and global evaluations of Shannon entropy~\cite{bdmpaper}. Formally,

$$BDM(X) = \sum_i^k m_5(x_i) + \log(s_i)
$$

where $s_i$ is the multiplicity of $x_i$, and $x_i$ the subsequences from the decomposition of $X$ into $k$ subsequences, with a possible remainder sequence $y<x$ if $|X|$ is not a multiple of the decomposition length $l$. 
More details on error estimations for this particular measure extending the power of $m_5$, and on the boundary conditions, are given in~\cite{bdmpaper}.

\subsection{The On-Line Encyclopedia of Integer Sequences (OEIS)}

The On-Line Encyclopedia of Integer Sequences (OEIS) is a database with the largest collection of integer sequences. It is created and maintained by Neil Sloane and the OEIS Foundation. 

Widely cited, the OEIS stores information on integer sequences of interest to both professional mathematicians and amateurs. As of 30 December 2016 it contained nearly 280\,000 sequences, making it the largest database of its kind. 

We found 875 binary sequences in the OEIS database, accessed through the knowledge engine WolframAlpha Pro and downloaded with the Wolfram Language. 

Examples of descriptions found to have the greatest algorithmic probability include the sequence ``A maximally unpredictable sequence" with associated sequence 0 1 0 0 1 1 0 1 0 1 1 1 0 0 0 1 0 0 0 0 1 1 1 1 0 1 1 0 0 1 0 1 0 0 1 0 0 1 1 1; or A068426, the ``Expansion of $ln_2$ in base 2" and associated sequence 0 1 0 0 0 1 1 0 1 1 0 0 0 0 0 1 0 1 0 0 1 1 1 0 0 1 0 1 1 1 0 1 1 1 0 0 0 0 0 0. This contrasts with sequences of high entropy such as sequence A130198, the single paradiddle, a four-note drumming pattern consisting of two alternating notes followed by two notes with the same hand, with sequence 0 1 0 0 1 0 1 1 0 1 0 0 1 0 1 1 0 1 0 0 1 0 1 1 0 1 0 0 1 0 1 1 0 1 0 0 1 0 1 1, or sequence A108737, found to be among the less compressible, with the description ``Start with $S = {}$. For $m = 0, 1, 2, 3, \ldots$ let $u$ be the binary expansion of $m$. If $u$ is not a substring of $S$, append the minimal number of 0s and 1s to $S$ to remedy this. The sequence gives $S$" and sequence 0 1 0 1 1 0 0 1 1 1 0 0 0 1 0 1 0 1 1 0 1 1 1 1 0 0 0 0 1 0 0 1 0 1 0 0 1 1 0 1. We found that the measure most driven by description length was compressibility.

The longest description of a binary sequence in the OEIS, identified as A123594, reads ``Unique sequence of 0s and 1s which are either repeated or not repeated with the following property: when the sequence is `coded' in writing down a 1 when an element is repeated and a 0 when it is not repeated and by putting the initial element in front of the sequence thus obtained, the above sequence appears".

\subsection{Results}

We found that the textual description length as derived from the database is, as illustrated above, best correlated with the AP-based (BDM) measure, with Spearman test statistic 0.193, followed by compression (only the sequence is compressed, not the description) with 0.17, followed by entropy, with 0.09 (Fig.~\ref{fig:main}). Spearman rank correlation values among complexity measures reveal how these measures are related to each other with BDM v Compress: 0.21, BDM v Entropy: 0.029 and Compress v Entropy: -0.01 from 875 binary sequences in the OEIS database.

\begin{figure}[htbp!]
  \centering

  \includegraphics[width=5.3cm]{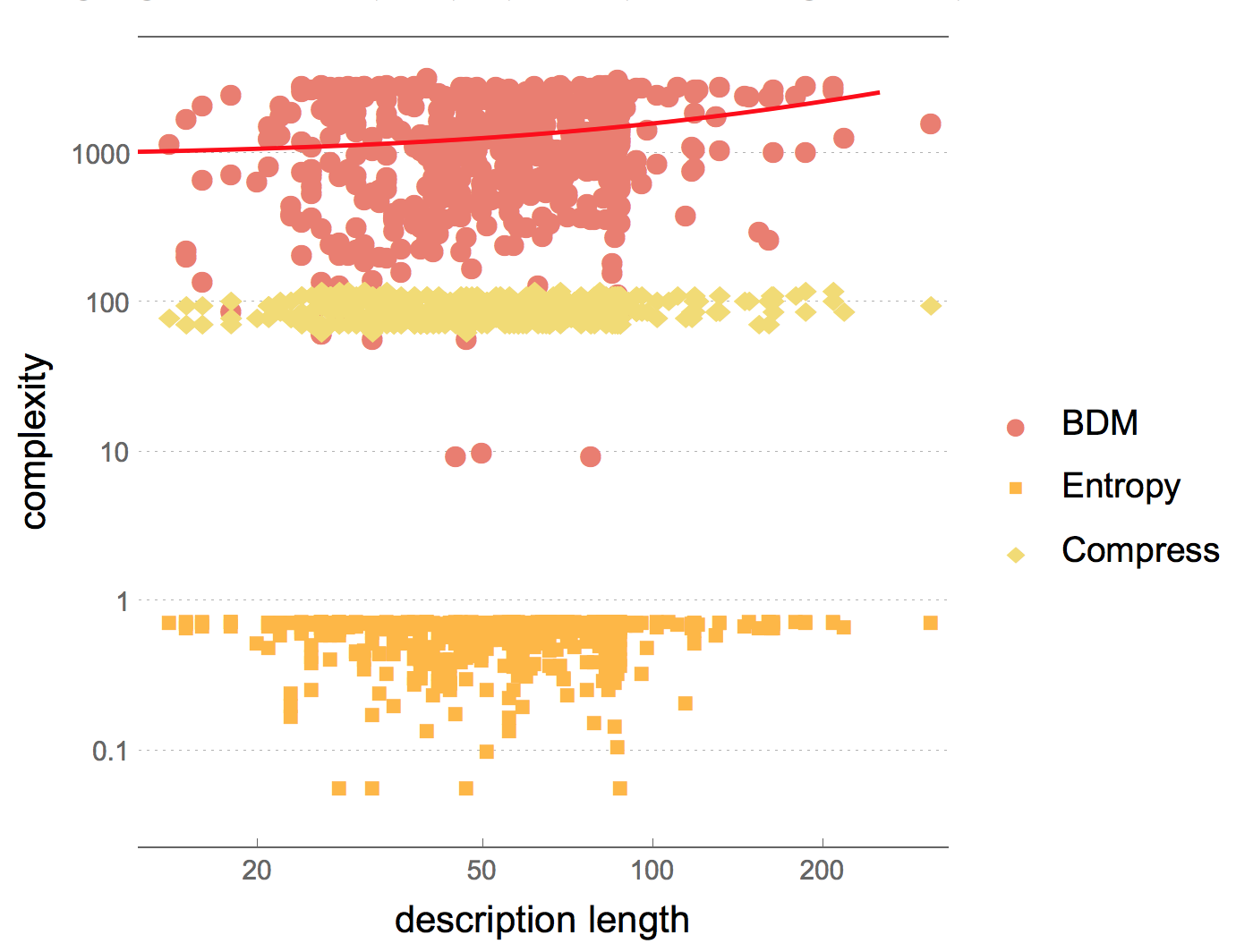}\hspace{.5cm}  \includegraphics[width=6.2cm]{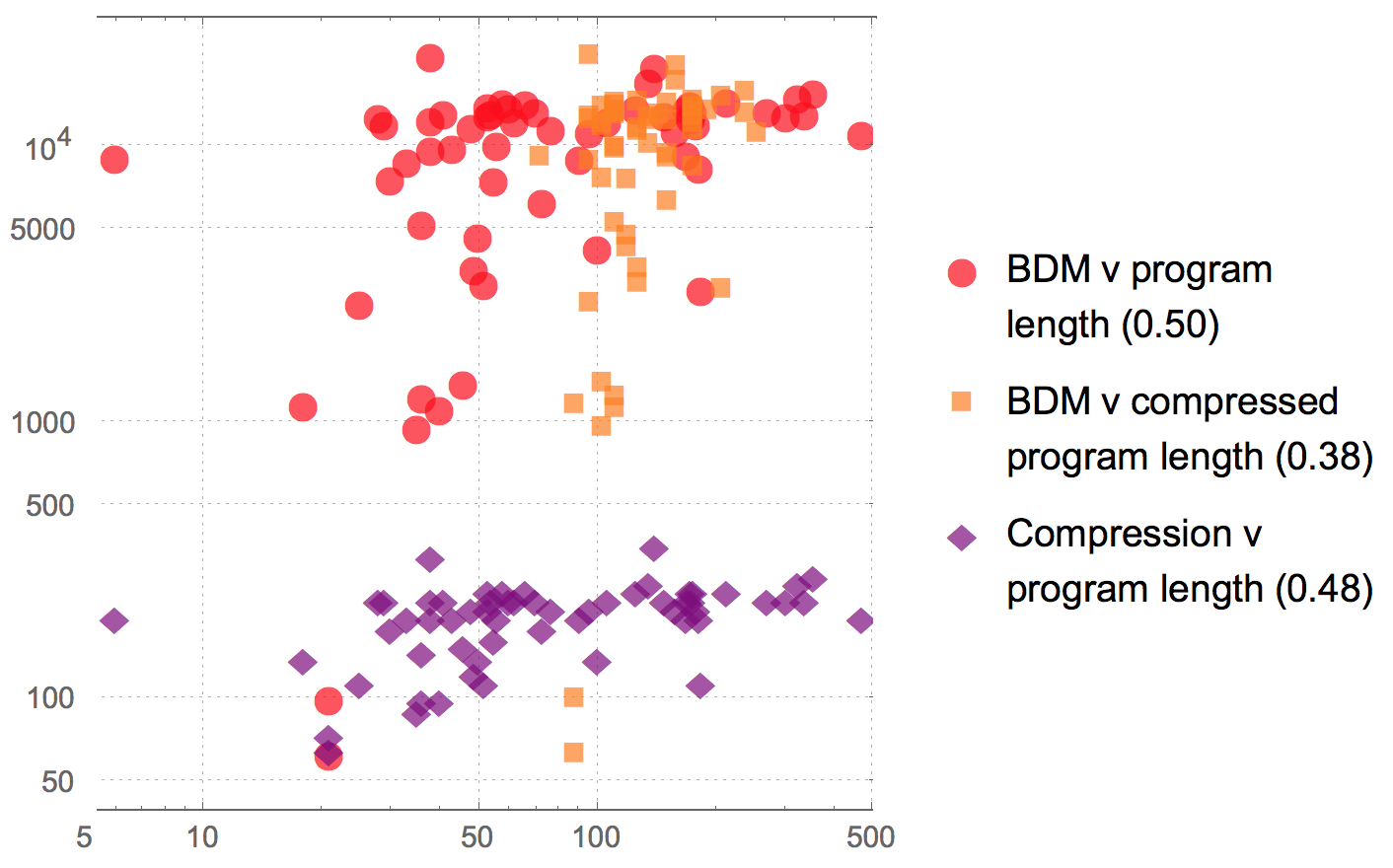}
  \caption{Left: Correlation between the estimated algorithmic complexity ($\log$) by the AP-based measure (BDM) an the length of the text description of each sequence from the OEIS. Fitted line for highest correlation (BDM) is given by $955.709 + 6.47818 x$ using least squares. Right: Algorithmic complexity estimation by BDM ($\log$) and of compression on program length (in the Wolfram Language/Mathematica) as coming from the OEIS. In parenthesis the Spearman rank correlation values for each case. Further compressing the program length using Compress resulted in a lower correlation value and BDM outperformed lossless compression.}
  \label{fig:main}
\end{figure}

We noticed that the descriptions of some sequences referred to other sequences to produce a new one (e.g. ``A051066 read mod 2"). This artificially made some sequence descriptions look shorter than they should be. When avoiding all sequences referencing others, all Spearman rank values increased significantly, with values 0.25, 0.22 and 0.12 for BDM, compression and entropy respectively. 

To test whether the AP-based (BDM) measure captures some algorithmic content that the best statistical measures (Compress and entropy) may be missing, we compressed the sequence description and compared again against the sequence complexity. The correlation between the compressed description and the sequence compression came closer to that of the AP-estimation by BDM, and BDM itself was even better. The Spearman values after compressing textual descriptions were 0.27, 0.24 and 0.13 for BDM, Compress and entropy respectively.

We then looked at 139\,546 integer sequences from the OEIS database, avoiding other non-integer sequences in the database. Those considered represent more than half of the database. Every integer was converted into binary, and for each binary sequence representing an integer an estimation of its algorithmic complexity was calculated. We compared the total sum of the complexity of the sequence (first 40 terms) against its text description length (both compressed and uncompressed) by converting every character into its ASCII code, program length and function lengths, these latter in the Wolfram Language (using Mathematica). While none of those descriptions can be considered the shortest possible, their lengths are upper bounds of the maximum possible lengths of the shortest versions. As shown in Fig.~\ref{fig:main}, we found that the AP-based measure (BDM) performed best when comparing program size and estimated complexity from the program-generated sequence.

\section{Conclusion}
\label{sec:comm}

Computable approximations to algorithmic information measures are certainly useful. For example, lossless compression methods have been widely used to approximate $K$, despite their limitations and their departure from algorithmic complexity. Most of these algorithms are in fact entropy-rate estimators~\cite{zenildata} e.g. all those based in LZ and LZW algorithms such as zip, gzip and png. In this paper we have studied the formal properties of a computable algorithmic
probability measure $m$ and of finite approximations $m_k$ to
$m$. These measures can be used to approximate $K$ by means of the Coding Theorem Method (CTM), despite the \textit{invariance theorem}, that sheds no light on the rate of convergence to $K$. Here we compared $m$ and $D(5)$ and concluded that for practical purposes the two produce similar results. What we have reported in this paper are the first steps toward a formal analysis of finite approximations to algorithmic
probability-based measures based on small Turing machines. The results shown in Fig.~\ref{fig:main} strongly suggest that AP-based measures are not only an alternative to lossless compression algorithms for estimating algorithmic (Kolmogorov-Chaitin) complexity, but may actually capture features that statistical methods such as lossless compression, based on popular algorithms such as LWZ and entropy, cannot capture.

\section*{Acknowledgments}
\label{sec:acknowledgments}

The authors wish to thank the rest of the Algorithmic Nature lab.


\begin{thebibliography}{99}

\bibitem{brady} A.H. Brady, The determination of the value of Rado's noncomputable function $\Sigma(k)$ for four-state Turing machines, \textit{Mathematics of Computation 40} (162): 647--665, 1983.
\bibitem{chaitin} G.J. Chaitin, On the length of programs for computing finite binary sequences: Statistical considerations, \textit{Journal of the ACM}, 16(1):145--159, 1969.
\bibitem{cover} T.M. Cover, J.A. Thomas, \textit{Elements of Information Theory}, Wiley Series in Telecommunications and Signal Processing, Wiley-Blackwell; 2nd. Edition ed., 2006.
\bibitem{delahaye2007} J-P. Delahaye, H. Zenil, Towards a stable definition of Kolmogorov-Chaitin complexity, arXiv:0804.3459, 2007 (preprint).
\bibitem{delahayezenil} J.-P. Delahaye \& H. Zenil, Numerical Evaluation of the Complexity of Short Strings: A Glance Into the Innermost Structure of Algorithmic Randomness, \textit{Applied Math. and Comp.}, 2012.
\bibitem{gauvrit} N. Gauvrit,  H. Singmann, F. Soler-Toscano, H. Zenil, Algorithmic complexity for psychology: A user-friendly implementation of the coding theorem method, \textit{Behavior Research Methods}, 2015 (in press).
\bibitem{visual} N. Gauvrit, F. Soler-Toscano, H. Zenil, Natural scene statistics mediate the perception of image complexity, \textit{Visual Cognition}, vol. 22:8, pp.1084--1091, 2014.
\bibitem{cognition} V. Kempe, N. Gauvrit, D. Forsyth, Structure emerges faster during cultural transmission in children than in adults, \textit{Cognition}, Cognition
vol. 136, pp 247–254, 2015.
\bibitem{kircher} W. Kircher, M. Li, and P. Vitanyi, The Miraculous Universal Distribution, \textit{The Mathematical Intelligencer,} 19:4,
  7--15, 1997.
\bibitem{kolmo} A.N. Kolmogorov, Three approaches to the quantitative  definition of information, \textit{Problems of Information and Transmission}, 1(1):1--7, 1965.
\bibitem{levin} L. Levin, Laws of information conservation (non-growth) and aspects of the foundation of probability theory, \emph{Problems in Form. Transmission} 10. 206--210, 1974.
\bibitem{linma} L. Ma, O. Brandouy, J.-P. Delahaye and H. Zenil, Algorithmic Complexity of Financial Motions, \textit{Research in International Business and Finance,} pp. 336-347, 2014.
\bibitem{rado} T. Rad\'o, On non-computable functions, \textit{Bell System Technical Journal,} Vol. 41, No. 3, pp. 877--884, 1962.
\bibitem{nicoetal} N. Gauvrit, H. Zenil, J.-P. Delahaye and F. Soler-Toscano Algorithmic complexity for short binary strings applied to psychology: a primer, \textit{Behavior Research Methods}, vol. 46, Issue 3, 2014, pp. 732--744.
\bibitem{ploscompbio} N. Gauvrit, H. Zenil, F. Soler-Toscano, J.-P. Delahaye, P. Brugger, \textit{Human Behavioral Complexity Peaks at Age 25}, PLoS Comput Biol 13(4): e1005408, 2017. \bibitem{d5} F. Soler-Toscano, H. Zenil, J.-P. Delahaye and N. Gauvrit, ``Calculating Kolmogorov Complexity from the Output Frequency Distributions of Small Turing Machines'', \textit{PLoS ONE} 9(5): e96223.
\bibitem{computability} F. Soler-Toscano, H. Zenil, J.-P. Delahaye and N. Gauvrit, Correspondence and Independence of Numerical Evaluations of Algorithmic Information Measures,
\textit{Computability}, vol. 2, number 2 (2013), pp. 125--140. 
\bibitem{solomonoff} R.J. Solomonoff, A formal theory of inductive inference: Parts 1 and 2. \textit{Information and Control}, 7:1--22 and 224--254, 1964.
\bibitem{zenilalgo} H. Zenil and J-P. Delahaye, On the Algorithmic Nature of the World, In G. Dodig-Crnkovic and M. Burgin (eds), \textit{Information and Computation}, World Scientific Publishing Company, 2010.
\bibitem{zenileco} H. Zenil and J.-P. Delahaye, An Algorithmic Information-theoretic Approach to the Behaviour of Financial Markets, \textit{Journal of Economic Surveys}, vol. 25-3, pp. 463, 2011.
\bibitem{kolmo2d} H. Zenil, F. Soler-Toscano, J.-P. Delahaye and N. Gauvrit, Two-Dimensional Kolmogorov Complexity and Validation of the Coding Theorem Method by Compressibility, \textit{PeerJ Comput. Sci.}, 1:e23, 2015.
\bibitem{zenildata} H. Zenil, Algorithmic Data Analytics, Small Data Matters and Correlation versus Causation. In Max Ott, Wolfgang Pietsch, Jörg Wernecke (eds.) \textit{Berechenbarkeit der Welt? Philosophie und Wissenschaft im Zeitalter von Big Data (Computability of the World? Philosophy and Science in the Age of Big Data)}, Springer Verlag, Heidelberg, 2017.
 \bibitem{zenilgraph} H. Zenil, F. Soler-Toscano, K. Dingle and A. Louis, Correlation of automorphism group size and topological properties with program-size complexity evaluations of graphs and complex networks, \textit{Physica A: Statistical Mechanics and its Applications}, Volume 404, pp. 341--358, 2014.
 \bibitem{methodszenil} H. Zenil, N.A. Kiani and J. Tegn\'er, Methods of Information Theory and Algorithmic Complexity for Network Biology, \textit{Seminars in Cell and Developmental Biology},  vol. 51, pp. 32-43, 2016.
\bibitem{bdmpaper} H. Zenil, F. Soler-Toscano, N.A. Kiani, S. Hern\'andez-Orozco, A. Rueda-Toicen, A Decomposition Method for Global Evaluation of Shannon Entropy and Local Estimations of Algorithmic Complexity, arXiv:1609.00110 [cs.IT].
\bibitem{zenilphysres} H. Zenil, N.A. Kiani and Jesper Tegn\'er, Low Algorithmic Complexity Entropy-deceiving Graphs, \textit{Phys Rev E.} (in press).

 
\end{thebibliography}
\end{document}